  \theoremstyle{definition}
  \newtheorem{defn}{Definition}
\theoremstyle{plain}
\newtheorem{thm}{Theorem}
  \theoremstyle{remark}
  \newtheorem{rem}{Remark}
  \theoremstyle{plain}
  \newtheorem{cor}{Corollary}
  \theoremstyle{plain}
  \newtheorem{lem}{Lemma}
 \theoremstyle{definition}
  \newtheorem{example}{Example}
\begin{document}

\title{Multi-tier Network Performance Analysis using a Shotgun Cellular
System}

\author{{\normalsize Prasanna Madhusudhanan$^{*}$, Juan G. Restrepo$^{\dagger}$,
Youjian (Eugene) Liu$^{*}$, Timothy X Brown}$^{*+}${\normalsize ,
and Kenneth R. Baker}$^{+}${\normalsize \\$^{*}$ Department of
Electrical, Computer and Energy Engineering, $^{\dagger}$ Department
of Applied Mathematics, $^{+}$ Interdisciplinary Telecommunications
Program\\University of Colorado, Boulder, CO 80309-0425 USA\\\{mprasanna,
juanga, eugeneliu, timxb, ken.baker\}@colorado.edu}}
\maketitle
\begin{abstract}
This paper studies the carrier-to-interference ratio $\left(\frac{C}{I}\right)$
and carrier-to-interference-plus-noise ratio $\left(\frac{C}{I+N}\right)$
performance at the mobile station (MS) within a multi-tier network
composed of $M$ tiers of wireless networks, with each tier modeled
as the \textit{homogeneous} n-dimensional (n-D, n=1,2, and 3) shotgun
cellular system, where the base station (BS) distribution is given
by the homogeneous Poisson point process in n-D. The $\frac{C}{I}$
and $\frac{C}{I+N}$ at the MS in a single tier network are thoroughly
analyzed to simplify the analysis of the multi-tier network. For the
multi-tier network with given system parameters, the following are
the main results of this paper: (1) semi-analytical expressions for
the tail probabilities of $\frac{C}{I}$ and $\frac{C}{I+N}$; (2)
a closed form expression for the tail probability of $\frac{C}{I}$
in the range {[}1,$\infty$); (3) a closed form expression for the
tail probability of an approximation to $\frac{C}{I}$ in the entire
range {[}0,$\infty$); (4) a lookup table based approach for obtaining
the tail probability of $\frac{C}{I+N}$, and (5) the study of the
effect of shadow fading and BSs with ideal sectorized antennas on
the $\frac{C}{I}$ and $\frac{C}{I+N}$. Based on these results, it
is shown that, in a practical cellular system, the installation of
additional wireless networks (microcells, picocells and femtocells)
with low power BSs over the already existing macrocell network will
always improve the $\frac{C}{I+N}$ performance at the MS.\end{abstract}
\begin{IEEEkeywords}
Multi-tier networks, Cellular Radio, Co-channel Interference, Fading
channels, Poisson point process.
\end{IEEEkeywords}

\section{Introduction\label{sec:Introduction}}

The modern cellular communication network is a complex overlay of
heterogeneous networks such as macrocells, microcells, picocells,
femtocells, etc. The base station (BS) distribution appears increasingly
irregular as the density of BSs grows over time while bounded by cell
site limitation. Due to computational constraints, system designers
cannot study the overall network at once, and have to resort to simulations
for specific portions of the network. As it is hard to obtain insight
and general conclusions from such studies, it is desirable to abstract
and simplify the model. At one end of the abstraction, the BSs are
assumed to be at the centers of regular hexagonal cells. At the other
end, the BS deployments are modeled according to a Poisson point process.
In \cite{Brown2000}, the author makes a connection between the ideal
hexagonal cellular system and the cellular system with the BS placement
according to a homogeneous Poisson point process on a plane (two dimensions,
2-D), called the shotgun cellular system (SCS). It is shown that the
carrier-to-interference ratio, $\left(\frac{C}{I}\right)$, of the
SCS lower bounds that of the ideal hexagonal cellular system and moreover,
they converge in the strong shadow fading regime. We have explored
the SCS in detail in \cite{Madh0000:Carrier,Madhusudhanan2010a,Brown2000}.
The utility of the SCS model in the study of the cognitive radio networks
can be found in \cite{Madhusudhanan2010}.

In this paper, we study the practical cellular system by viewing the
macrocells, microcells, picocells and femtocells as the different
tiers of a multi-tier network. We focus on the $\frac{C}{I}$ and
the carrier-to-interference-plus-noise ratio $\left(\frac{C}{I+N}\right)$
at the mobile station (MS) in a multi-tier network with $M$ tiers
of heterogeneous networks (hence called an $M$-tier network). The
BS distribution of the practical cellular system follows regular
topologies (e.g. to match the customer density patterns along highways,
between suburbs and city centers and within large multi-storey buildings).
Each tier of the $M$-tier network is modeled as the \textit{homogeneous}
$l$- dimensional $\left(l-\mathrm{D,}\ l=1,2,\mathrm{\ and\ }3\right)$
SCS, where the BS distribution is according to the homogeneous Poisson
point process in $\mathbb{R}^{l},$ $l=1,2,3$. In the \textit{homogeneous}
$l$-D SCS, $l$=1 is a model for the highway scenario, $l$=2 models
the planar deployment of BSs in suburbs, and $l$=3 models the BS
deployments within large multi-storey buildings and wireless LANs
(WLAN) in muti-storey residential areas. A Poisson point process in
$\mathbb{R}^{2}$ has been a popular model adopted in the literature
for the locations of nodes in the study of ad hoc and other uncoordinated
networks (\cite{WebYan2005,Zorzi1994,Takagi1984} are a few selected
references). It has also been used in studying two-tier networks composed
of macrocells and femtocells \cite{Chandrasekhar2009b,Xia2010}. Here,
we characterize the cellular performance in a multi-tier network with
BS distributions according to the Poisson point process in $\mathbb{R}^{1}$,
$\mathbb{R}^{2}$ and $\mathbb{R}^{3}$. In \cite{Dhillon2011}, the
authors study the multi-tier network with the BS distribution in the
various tiers according to the Poisson point process in $\mathbb{R}^{2},$
and derive a closed form expression for the tail probability of $\frac{C}{I}$
in the range $[1,\infty)$ for the special case of Rayleigh fading.
In this paper, we characterize the $\frac{C}{I}$ in the entire range
$[0,\infty)$ and for any general fading distribution.

\subsubsection*{Contributions of the paper}

Firstly, we emphasize that the study of the cellular performance of
the multi-tier network is tightly coupled with a similar study on
a single tier network. Hence, we indulge in thoroughly understanding
the single tier network and its properties. Sections \ref{sec:ctoi_singleTier},
\ref{sec:noise} and \ref{sec:sf} deal with the single tier network.
In Section \ref{sec:multiTier}, based on the theory developed in
the previous sections, we completely characterize the signal quality
at the MS in a $M$-tier network measured in terms of the carrier-to-interference
ratio $\left(\frac{C}{I}\right)$ and the carrier-to-interference-plus-noise
ratio $\left(\frac{C}{I+N}\right)$. In particular, for the multi-tier
network, we derive (1) semi-analytical expressions for the tail probabilities
of $\frac{C}{I}$ and $\frac{C}{I+N}$; (2) a closed form expression
for the tail probability of $\frac{C}{I}$ in the range $\left[1,\infty\right)$;
(3) a closed form expression for the tail probability of an approximation
to $\frac{C}{I}$ in the entire range {[}0,$\infty$); (4) a lookup
table based approach for obtaining the tail probability of $\frac{C}{I+N}$,
and (5) the effect of shadow fading and BSs with ideal sectorized
antennas on the $\frac{C}{I}$ and $\frac{C}{I+N}$. Finally, it is
shown that the installation of additional wireless networks (microcells,
picocells and femtocells) with low power BSs over the already existing
macrocell network will always improve the $\frac{C}{I+N}$ performance
at the MS.

\section{System Model\label{sec:modelreview}}

This section describes the various elements used to model the wireless
network, namely, the BS layout, the radio environment, and the performance
metrics of interest.

\subsubsection*{BS Layout}

We define the SCS and \textit{homogeneous} $l$-D SCS ($l$=1,2,3)
and describe the model for the single tier and multi-tier networks.
\begin{defn}
The \textit{Shotgun Cellular System (SCS) }is a model for the cellular
system in which the BSs are placed in a given $l$-dimensional plane
($l=1,2,\ \mathrm{and}\ 3$) according to a Poisson point process
on $\mathbb{R}^{l}$. The intensity function of the Poisson point
process is called the BS density function in the context of the SCS.
(See \cite{Madhusudhanan2010a} for more details.)
\end{defn}

\begin{defn}
In the \textit{homogeneous} $l$-D SCS $\left(l\in\left\{ 1,2,3\right\} \right)$,
the BSs are placed according to a homogeneous Poisson point process
on $\mathbb{R}^{l}$ with a BS density $\lambda_{0},$ such that the
probability that there exists a BS in a small region $\mathcal{H}\subseteq\mathbb{R}^{l}$
is $\lambda_{0}\left|\mathcal{H}\right|,$ where $\left|\mathcal{H}\right|\ll1$
is the length, area or volume of the region $\mathcal{H}$ for $l=1,2,\mathrm{and}\ 3$,
respectively; and the events in non-overlapping regions are independent
of each other.
\end{defn}

\subsubsection*{Radio Environment}

The signal from the BS undergoes path-loss and shadow fading; and
is also affected by background noise. The signal power at a distance
$R$ from the BS is given by $P=K\Psi R^{-\varepsilon},$ where $K$
captures the transmission power and the antenna gain of the BS, $\Psi$
is the random shadow fading factor, and $R^{-\varepsilon}$ represents
the inverse power law path-loss with $\varepsilon$ as the path-loss
exponent, and $R$ as the distance from the BS. The noise power in
the system is $N$.

\subsubsection*{Single tier network}

In this paper, the single-tier network refers to the macrocell network
and the BS layout is according to the \textit{homogeneous} $l$-D
SCS, $l=1,2,3$.

\subsubsection*{Multi-tier (M-tier) network}

The $M$-tier network is assumed to be composed of $M$ independent
\textit{homogeneous} $l$-D SCSs with BS density $\left\{ \lambda_{i}\right\} _{i=1}^{M},$
for each tier.  For the $M$-tier network, $K$ and the cumulative
density function (c.d.f.) of $\Psi$ are different for each tier.

\subsubsection*{Performance Metric}

In this paper, we are concerned with the signal quality at a MS within
the wireless network. The MS is assumed to be located at the origin
of $\mathbb{R}^{l},\ l=1,2,3$ in which the multi-tier network is
defined. The MS receives signals from all the BSs, and chooses to
communicate with the BS that corresponds to the strongest received
signal power. This BS is referred to as the {}``serving BS'', and
all the other BSs are collectively called the {}``interfering BSs''.
Consequently, the signal quality at the MS is defined as the ratio
of the received power from the serving BS $(\mbox{denoted by }C\ \mathrm{or}\ P_{S})$
to the sum of the total interference power ($\mbox{denoted by }$
$I\ \mathrm{or}\ P_{I},$ sum of the powers from the interfering BSs)
and the noise power $(N)$, and is called the carrier-to-interference-plus-noise
ratio $\left(\frac{C}{I+N}\right)$. In an {}``interference limited
system'', $I\gg N$ and the signal quality is referred to as the
carrier-to-interference ratio $\left(\frac{C}{I}\right)$. Thus, for
a single tier network, the $\frac{C}{I}$ and $\frac{C}{I+N}$ are
\begin{equation}
\frac{C}{I}\overset{\left(a\right)}{=}\frac{K_{S}\Psi_{S}R_{S}^{-\varepsilon}}{\sum_{i=1}^{\infty}K_{i}\Psi_{i}R_{i}^{-\varepsilon}},\ \frac{C}{I+N}\overset{\left(b\right)}{=}\frac{K_{S}\Psi_{S}R_{S}^{-\varepsilon}}{\sum_{i=1}^{\infty}K_{i}\Psi_{i}R_{i}^{-\varepsilon}+N},\label{eq:ctoisf}\end{equation}

\noindent where subscript {}``$S$'' denotes the serving BS and
subscript {}``$i$'' indexes the interfering BSs; $K_{S}=\left\{ K_{i}\right\} _{i=1}^{\infty}$
are the transmission powers that can be equal a constant or independent
and identically distributed (i.i.d.) random variables; $R_{S}\ \mathrm{and}\ \{R_{i}\}_{i=1}^{\infty}$
are random variables that come from the underlying Poisson point process
that governs the BS placement; $\Psi_{S}\ \mathrm{and}\ \{\Psi_{i}\}_{i=1}^{\infty}$
are i.i.d. random variables. Hence, $\frac{C}{I}$ and $\frac{C}{I+N}$
are random variables, and can be characterized by a probability density
function (p.d.f.), c.d.f. or the tail probability. The tail probability
of $\frac{C}{I+N}$ is given by $\mathbb{P}\left(\left\{ \frac{C}{I+N}>\eta\right\} \right)$,
and is the probability that a MS in the SCS has a signal quality of
at least $\eta,\ \eta\ge0$. In the following section, we characterize
the tail probability of the $\frac{C}{I}$ at the MS in a single tier
network.

\section{$\frac{C}{I}$ characterization for a single tier network\label{sec:ctoi_singleTier}}

Here, the transmission power and the antenna gains of all the BSs
in the SCS are assumed to be constant $(\mathrm{say,\ }K)$. Also,
the shadow fading factors are assumed to be unity. Hence, from the
expression for $\frac{C}{I}$ in $\left(\ref{eq:ctoisf}a\right)$,
the BS closest to the MS is the serving BS and the expression for
$\frac{C}{I}$ is \begin{equation}
\frac{C}{I}=\frac{KR_{1}^{-\varepsilon}}{\sum_{i=2}^{\infty}KR_{i}^{-\varepsilon}},\label{eq:ctoinosf}\end{equation}
where $R_{1}\le R_{2}\le R_{3}\cdots$ are the distances between the
BSs and the MS, arranged in a non-decreasing order. Further, recall
that the BS layout in the single tier network is as in the \textit{homogeneous
}$l$-D SCS $\left(l=1,2,3\right)$ with BS density $\lambda_{0}$.
Thus, the p.d.f. of $R_{1}$ is given by $f_{R_{1}}\left(r_{1}\right)=\lambda_{0}b_{l}r_{1}^{l-1}\mathrm{e}^{-\frac{\lambda_{0}b_{l}r_{1}^{l}}{l}},$
where $r_{1}\geq0$ and $b_{l}=2,2\pi,4\pi$ for $l=1,2,3$, respectively,
and the conditional p.d.f. of the $i\mathrm{^{th}}$ closest BS conditioned
on the $(i-1){}^{\mathrm{th}}$ closest BS, is $f_{R_{i}|R_{i-1}}\left(r_{i}|r_{i-1}\right)=\lambda_{0}b_{l}r_{i}^{l-1}\mathrm{e}^{-\frac{\lambda_{0}b_{l}\left(r_{i}^{l}-r_{i-1}^{l}\right)}{l}},$
$r_{i}\geq r_{i-1}$.
\begin{thm}
\label{thm:ldscs} \emph{In a homogeneous $l$-D SCS with a constant
BS density $\lambda_{0}$, if the path-loss exponent satisfies $\varepsilon>l$, }

\emph{(a) the characteristic function of $P_{I}$ conditioned on $R_{1}$
is \begin{eqnarray}
 &  & \Phi_{P_{I}|R_{1}}\left(\omega|r_{1}\right)\nonumber \\
 &  & =\exp\left(\frac{\lambda_{0}b_{l}r_{1}^{l}}{l}\left(1-_{1}F_{1}\left(-\frac{l}{\varepsilon};1-\frac{l}{\varepsilon};\frac{i\omega K}{r_{1}^{\varepsilon}}\right)\right)\right),\label{eq:charfnPi2}\end{eqnarray}
}

\emph{(b) the characteristic function of $\left(\frac{C}{I}\right)^{-1}$
is given by \begin{eqnarray}
\Phi_{\left(\frac{C}{I}\right)^{-1}}\left(\omega\right) & = & \mathrm{E}_{R_{1}}\left[\Phi_{P_{I}|R_{1}}\left(\left.\frac{\omega}{P_{S}}\right|R_{1}\right)\right]\nonumber \\
 & = & \frac{1}{_{1}F_{1}\left(-\frac{l}{\varepsilon};1-\frac{l}{\varepsilon};i\omega\right)},\label{eq:charfnitoc2}\end{eqnarray}
where }\textup{\emph{$\mathrm{E}_{R_{1}}$}}\emph{ is the expectation
w.r.t. $R_{1}$, and $_{1}F_{1}\left(\cdot;\cdot;\cdot\right)$ is
called the confluent hypergeometric function of the first kind .}\end{thm}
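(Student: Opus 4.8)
The plan is to compute the conditional characteristic function of the total interference $P_I=\sum_{i=2}^{\infty}KR_i^{-\varepsilon}$ by exploiting the Poisson structure of the interfering BSs, then to obtain part (b) by averaging over $R_1$ after noting that $(C/I)^{-1}=P_I/P_S$ with $P_S=KR_1^{-\varepsilon}$. For part (a), the key observation is that, conditioned on $R_1=r_1$, the remaining points of the homogeneous $l$-D Poisson process lie outside the ball of radius $r_1$, and their radial coordinates form a one-dimensional inhomogeneous Poisson process on $(r_1,\infty)$ with intensity $\lambda_0 b_l r^{l-1}$ (this is exactly the conditional p.d.f. stated just before the theorem, extended to a Poisson process). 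The interference is then a sum over the points of this process of the marks $g(r)=Kr^{-\varepsilon}$, so Campbell's theorem / the Laplace (characteristic) functional of a Poisson process gives
\begin{equation}
\Phi_{P_I|R_1}(\omega|r_1)=\exp\!\left(\lambda_0 b_l\int_{r_1}^{\infty}\!\left(\mathrm{e}^{i\omega K r^{-\varepsilon}}-1\right)r^{l-1}\,dr\right).\label{eq:plan-campbell}
\end{equation}
The convergence of the integral at $r\to\infty$ requires $K r^{-\varepsilon}\cdot r^{l-1}$ to be integrable there, i.e. $\varepsilon>l$, which is precisely the stated hypothesis; at $r=r_1$ there is no issue since the integrand vanishes.

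The remaining work in part (a) is to evaluate the integral in \eqref{eq:plan-campbell} in closed form as the ${}_1F_1$ expression claimed. I would substitute $u=r^{-\varepsilon}$ (so $r=u^{-1/\varepsilon}$, $r^{l-1}\,dr=-\tfrac{1}{\varepsilon}u^{-l/\varepsilon-1}\,du$), turning it into $\tfrac{\lambda_0 b_l}{\varepsilon}\int_0^{r_1^{-\varepsilon}}(\mathrm{e}^{i\omega K u}-1)u^{-l/\varepsilon-1}\,du$, expand $\mathrm{e}^{i\omega K u}-1=\sum_{k\ge1}(i\omega K u)^k/k!$, integrate term by term, and recognize the resulting power series in $i\omega K r_1^{-\varepsilon}$ as the series for ${}_1F_1\!\left(-\tfrac{l}{\varepsilon};1-\tfrac{l}{\varepsilon};\cdot\right)$ up to the prefactor, matching \eqref{eq:charfnPi2}. (One must be slightly careful: the individual terms $\int_0^{\cdot}u^{k-l/\varepsilon-1}\,du$ converge for each $k\ge1$ because $\varepsilon>l$, and summing gives the $1-{}_1F_1$ form after pulling out the constant term; this bookkeeping, including the identity ${}_1F_1(a;a;z)$-type reductions for the $k=0$ piece, is the one genuinely computational step.) Alternatively one can cite the integral representation of ${}_1F_1$ directly, but I would present the series argument for self-containedness.

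For part (b), write $(C/I)^{-1}=P_I/(KR_1^{-\varepsilon})$, so $\Phi_{(C/I)^{-1}}(\omega)=\mathrm{E}_{R_1}\!\left[\Phi_{P_I|R_1}\!\left(\omega/(KR_1^{-\varepsilon})\,\big|\,R_1\right)\right]$. Substituting $\omega\mapsto \omega r_1^{\varepsilon}/K$ into \eqref{eq:charfnPi2} makes the argument of ${}_1F_1$ equal to $i\omega$ independently of $r_1$, leaving
\begin{equation}
\Phi_{(C/I)^{-1}}(\omega)=\mathrm{E}_{R_1}\!\left[\exp\!\left(\tfrac{\lambda_0 b_l R_1^{l}}{l}\Big(1-{}_1F_1\big(-\tfrac{l}{\varepsilon};1-\tfrac{l}{\varepsilon};i\omega\big)\Big)\right)\right].\label{eq:plan-avg}
\end{equation}
Now use the p.d.f. $f_{R_1}(r_1)=\lambda_0 b_l r_1^{l-1}\mathrm{e}^{-\lambda_0 b_l r_1^{l}/l}$; the substitution $t=\lambda_0 b_l r_1^{l}/l$ reduces \eqref{eq:plan-avg} to $\int_0^{\infty}\mathrm{e}^{-t\,{}_1F_1(-l/\varepsilon;1-l/\varepsilon;i\omega)}\,dt=1/{}_1F_1(-l/\varepsilon;1-l/\varepsilon;i\omega)$, which is \eqref{eq:charfnitoc2}. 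For this last integral to converge we need $\mathrm{Re}\,{}_1F_1(-l/\varepsilon;1-l/\varepsilon;i\omega)>0$; I would note that this holds (it is automatic since $\Phi_{(C/I)^{-1}}$ is a genuine characteristic function, or can be checked from the integral form of $1-{}_1F_1$), which closes the argument.

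The main obstacle I anticipate is justifying the Poisson-functional step \eqref{eq:plan-campbell} cleanly — in particular, that conditioning on the nearest point $R_1=r_1$ leaves an inhomogeneous Poisson process on $\{r>r_1\}$ (a consequence of the complete independence property of the Poisson process restricted to the exterior of the ball), and that the almost-sure convergence of the interference sum $P_I$ together with the $\varepsilon>l$ condition makes Campbell's formula applicable. Once that is in place, the rest is the hypergeometric bookkeeping described above, which is routine.
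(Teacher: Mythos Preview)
Your proposal is correct. The paper itself does not give a proof but simply cites \cite[Corollary~2]{Madhusudhanan2010a}; your argument is the standard self-contained route: apply the Laplace/characteristic functional of a Poisson process to the interferers outside the ball of radius $r_1$ (legitimate because conditioning on $R_1=r_1$ leaves an independent Poisson process on $\{r>r_1\}$), reduce the resulting integral to the ${}_1F_1$ series via the substitution $u=r^{-\varepsilon}$ and termwise integration (valid since $\varepsilon>l$), and then observe that replacing $\omega$ by $\omega/P_S$ makes the ${}_1F_1$ argument independent of $r_1$, so the $R_1$-average collapses to a single exponential integral. Your hypergeometric bookkeeping checks out: with $a=-l/\varepsilon$, $b=1-l/\varepsilon$ one has $(a)_k/(b)_k=(-l/\varepsilon)/(k-l/\varepsilon)$, which exactly matches the coefficients produced by the termwise integration and yields the prefactor $\lambda_0 b_l r_1^{l}/l$ in \eqref{eq:charfnPi2}. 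The only point worth tightening is the justification that $\mathrm{Re}\,{}_1F_1(-l/\varepsilon;1-l/\varepsilon;i\omega)>0$ so that the final $dt$-integral converges; this follows immediately from your integral representation, since $\mathrm{Re}\,{}_1F_1=1-\tfrac{l}{\varepsilon}\sum_{k\ge1}\tfrac{\mathrm{Re}[(i\omega)^k]}{k!(k-l/\varepsilon)}$ can be rewritten as $1+\lambda_0 b_l\int_{r_1}^\infty(\cos(\omega K r^{-\varepsilon})-1)r^{l-1}\,dr$ evaluated at the normalized scale, or more directly from the fact that $|\Phi_{(C/I)^{-1}}(\omega)|\le 1$ forces $|{}_1F_1|\ge 1$.
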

\begin{proof}
See \cite[Corollary 2]{Madhusudhanan2010a}.
\end{proof}
The significance of Theorem \ref{thm:ldscs} is in the following remarks.
\begin{rem}
\label{rem:ctoiTailProb}The tail probability of $\frac{C}{I}$ may
be directly obtained from the characteristic function and is given
by\begin{eqnarray}
 &  & \mathbb{P}\left(\left\{ \frac{C}{I}>\eta\right\} \right)\nonumber \\
 &  & =\begin{cases}
\int_{\omega=-\infty}^{\infty}\Phi_{\left(\frac{C}{I}\right)^{-1}}\left(\omega\right)\left(\frac{1-\exp\left(-\frac{i\omega}{\eta}\right)}{i\omega}\right)\frac{d\omega}{2\pi}, & \eta>0\\
1, & \eta=0.\end{cases}\label{eq:nosf-expr}\end{eqnarray}
\end{rem}
\begin{proof}
See \cite[Eq. (9)]{Madh0000:Carrier}.
\end{proof}

\begin{rem}
\label{rem:ctoiIndependentOfLambda}The characteristic function of
the $\left(\frac{C}{I}\right)^{-1}$ does not depend on $\lambda_{0}$,
and hence the tail probability of $\frac{C}{I}$ at a MS in the \textit{homogeneous}
$l$-D SCS does not depend on $\lambda_{0}$.
\end{rem}

\begin{rem}
\label{rem:ctoiAndEpsilon}The characteristic function of $\left(\frac{C}{I}\right)^{-1}$
for a \textit{homogeneous} 2-D and 3-D SCS is the same as that of
a \textit{homogeneous} 1-D SCS with path-loss exponents $\frac{\varepsilon}{2}$
and $\frac{\varepsilon}{3}$, respectively. Hence, the corresponding
$\frac{C}{I}$ performances are identical.

Remark \ref{rem:ctoiIndependentOfLambda} proves why the curves corresponding
to the \textit{homogeneous} 1-D, 2-D and 3-D SCSs in Fig. \ref{fig:uniform_scs_comparison}
are straight lines. Remark \ref{rem:ctoiAndEpsilon} helps build an
intuition of why the \textit{homogeneous} 1-D SCS has a higher tail
probability of $\frac{C}{I}$ than \textit{homogeneous} 2-D and 3-D
SCSs; Fig. \ref{fig:uniform_scs_comparison} now corresponds to comparing
the tail probabilities of $\frac{C}{I}$ in a \textit{homogeneous}
1-D SCS with path-loss exponents $\varepsilon,\ \frac{\varepsilon}{2},\ \mathrm{and}\ \frac{\varepsilon}{3}$,
respectively. As the path-loss exponent decreases, the BSs farther
away from the MS have a greater contribution to the total interference
power at the MS, and this leads to a poorer $\frac{C}{I}$ at the
MS and a smaller tail probability (computed by evaluating the integral
in $\left(\ref{eq:nosf-expr}\right)$). An important consequence of
Remark \ref{rem:ctoiAndEpsilon} is as follows.\end{rem}
\begin{cor}
\label{cor:ctoiClosedForm}\emph{For a homogeneous $l$-D SCS, $l=1,2,3$,
where the path-loss exponent is $\varepsilon$, the tail probability
of $\frac{C}{I}$ is \begin{eqnarray}
\mathbb{P}\left(\left\{ \frac{C}{I}>\eta\right\} \right) & = & \mathbb{P}\left(\left\{ \frac{C}{I}>1\right\} \right)\times\eta^{-\frac{l}{\varepsilon}}\label{eq:ctoiTailGt1-3}\\
 & = & \mathcal{K}_{\frac{\varepsilon}{l}}\eta^{-\frac{l}{\varepsilon}},\ \forall\ \eta\ge1,\ \varepsilon>l,\label{eq:ctoiTailGt1-2}\end{eqnarray}
where $\mathcal{K}_{\frac{\varepsilon}{l}}$ is a constant parametrized
by $\frac{\varepsilon}{l}$.}\end{cor}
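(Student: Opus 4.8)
The plan is to reformulate the event $\left\{ \frac{C}{I}>\eta\right\} $ for $\eta\ge1$ as a sum of indicators over the base stations and then apply the Campbell--Mecke formula for the underlying Poisson process. First I would pass from the distances to the received powers $P_{i}=KR_{i}^{-\varepsilon}$. Since the BS distances form a Poisson process on $[0,\infty)$ with intensity $\lambda_{0}b_{l}r^{l-1}\,dr$, the mapping $r\mapsto Kr^{-\varepsilon}$ turns $\left\{ P_{i}\right\} $ into a Poisson process on $(0,\infty)$ with intensity $\nu(dp)=c_{0}\,\beta\,p^{-\beta-1}\,dp$, where $\beta:=\frac{l}{\varepsilon}$ and $c_{0}=\frac{\lambda_{0}b_{l}K^{\beta}}{l}$. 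The hypothesis $\varepsilon>l$ is exactly $\beta\in(0,1)$, which ensures $\int_{0}^{1}p\,\nu(dp)<\infty$, so $S:=\sum_{i}P_{i}<\infty$ almost surely; in fact $S$ is a positive $\beta$-stable random variable with $\mathrm{E}\!\left[\mathrm{e}^{-sS}\right]=\exp\!\left(-c_{0}\Gamma(1-\beta)s^{\beta}\right)$. Because the serving BS is the one with the strongest received power, $P_{S}=\max_{i}P_{i}$, $P_{I}=S-P_{S}$, and $\left\{ \frac{C}{I}>\eta\right\} =\left\{ P_{S}>\eta\left(S-P_{S}\right)\right\} $.

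The key observation, valid precisely because $\eta\ge1$, is that at most one base station can be ``dominant'': if $P_{i}>\eta\left(S-P_{i}\right)$ then $S-P_{i}\ge P_{j}$ for every $j\ne i$, so $P_{i}>\eta P_{j}\ge P_{j}$, forcing $P_{i}$ to be the unique maximum. Hence the $\{0,1\}$-valued identity
\begin{equation}
\mathbf{1}\!\left\{ \tfrac{C}{I}>\eta\right\} =\sum_{i}\mathbf{1}\!\left\{ P_{i}>\eta\left(S-P_{i}\right)\right\} ,\qquad\eta\ge1.
\end{equation}
Taking expectations and invoking the reduced Campbell--Mecke formula for the Poisson process $\left\{ P_{i}\right\} $---which replaces the self-referential sum $S-P_{i}$ by the total received power $S$ of the process, with $p$ an independently added point---gives
\begin{equation}
\mathbb{P}\!\left(\tfrac{C}{I}>\eta\right)=\int_{0}^{\infty}\mathbb{P}\!\left(S<\tfrac{p}{\eta}\right)c_{0}\beta\,p^{-\beta-1}\,dp,\qquad\eta\ge1.
\end{equation}
The substitution $p=\eta u$ extracts a factor $\eta^{-\beta}$, so $\mathbb{P}\!\left(\frac{C}{I}>\eta\right)=\eta^{-\beta}\int_{0}^{\infty}\mathbb{P}\!\left(S<u\right)c_{0}\beta u^{-\beta-1}\,du$; evaluating at $\eta=1$ identifies the integral with $\mathbb{P}\!\left(\frac{C}{I}>1\right)$, which is $\left(\ref{eq:ctoiTailGt1-3}\right)$. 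The integral is finite because $\mathbb{P}(S<u)$ decays super-polynomially as $u\downarrow0$ (a Chernoff bound on the stable variable $S$) while $\int^{\infty}p^{-\beta-1}\,dp<\infty$, so $\mathbb{P}\!\left(\frac{C}{I}>1\right)$ is a genuine constant in $(0,1]$.

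For $\left(\ref{eq:ctoiTailGt1-2}\right)$, note that $\frac{C}{I}$ is unchanged if all received powers are multiplied by a common constant, hence $\mathbb{P}\!\left(\frac{C}{I}>\eta\right)$ does not depend on $c_{0}$ --- the direct version of Remark \ref{rem:ctoiIndependentOfLambda}. Normalizing $c_{0}=1$, the law of $S$, and therefore $\mathbb{P}\!\left(\frac{C}{I}>1\right)$, depends only on $\beta=\frac{l}{\varepsilon}$, i.e.\ only on $\frac{\varepsilon}{l}$; set $\mathcal{K}_{\frac{\varepsilon}{l}}:=\mathbb{P}\!\left(\frac{C}{I}>1\right)$. (Equivalently, by Remark \ref{rem:ctoiAndEpsilon} one may reduce at the outset to the homogeneous $1$-D SCS with path-loss exponent $\frac{\varepsilon}{l}$ and run the same computation there.) The main obstacle is the ``at most one dominant BS'' step: pinning down the indicator identity for $\eta\ge1$ exactly --- and recognizing that it breaks for $\eta<1$, which is why the statement is restricted to $\eta\ge1$ --- after which the Campbell--Mecke evaluation and the scaling substitution are routine.
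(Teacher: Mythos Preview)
Your argument is correct, but it is genuinely different from the paper's. The paper does not prove the scaling law directly: it quotes the $2$-D identity $\mathbb{P}\!\left(\frac{C}{I}>\eta\right)=\mathbb{P}\!\left(\frac{C}{I}>1\right)\eta^{-2/\varepsilon}$ from \cite{Brown2000}, invokes Remark~\ref{rem:ctoiAndEpsilon} to transfer it to $l=1$ and $l=3$ by adjusting the path-loss exponent, and then reads off from the characteristic function in $\left(\ref{eq:charfnitoc2}\right)$ that $\mathbb{P}\!\left(\frac{C}{I}>1\right)$ depends only on $\varepsilon/l$. Your route instead maps the distances to the received-power Poisson process, uses the ``at most one dominant BS'' observation to linearize the tail event as a sum of indicators for $\eta\ge1$, and evaluates the expectation via the reduced Campbell--Mecke formula; the $\eta^{-l/\varepsilon}$ factor then drops out of a one-line substitution. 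The advantage of your approach is that it is self-contained (no external 2-D citation), it handles all $l$ simultaneously rather than by reduction, and it makes transparent exactly why the result is confined to $\eta\ge1$: the indicator identity $\mathbf{1}\{C/I>\eta\}=\sum_{i}\mathbf{1}\{P_{i}>\eta(S-P_{i})\}$ genuinely fails below $1$. The paper's approach is shorter on the page precisely because it defers the substantive work to \cite{Brown2000}, and it ties the corollary more tightly to the characteristic-function machinery of Theorem~\ref{thm:ldscs} already in hand.
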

\begin{proof}
In \cite{Brown2000}, we have shown that \begin{eqnarray}
\mathbb{P}\left(\left\{ \frac{C}{I}>\eta\right\} \right) & = & \mathbb{P}\left(\left\{ \frac{C}{I}>1\right\} \right)\times\eta^{-\frac{2}{\varepsilon}},\label{eq:ctoiTailGt1}\end{eqnarray}
where $\eta\ge1,$ and $\varepsilon>2$, for a \textit{homogeneous}
2-D SCS. From Remark \ref{rem:ctoiAndEpsilon}, $\left(\ref{eq:ctoiTailGt1}\right)$
holds for all \textit{homogeneous} 1-D SCS with path-loss exponent
$\frac{\varepsilon}{2}$ and therefore, for all \textit{homogeneous}
3-D SCSs with path-loss exponent $\frac{3\varepsilon}{2}$. Hence,
$\left(\ref{eq:ctoiTailGt1-3}\right)$ hold true, and $\left(\ref{eq:ctoiTailGt1-2}\right)$
is obtained by noting that the characteristic function of $\left(\frac{C}{I}\right)^{-1}$
is a function of $\frac{\varepsilon}{l}$ and so $\mathbb{P}\left(\left\{ \frac{C}{I}>1\right\} \right)$
is a constant.
\end{proof}
\begin{figure}
\begin{centering}
\includegraphics[scale=0.7]{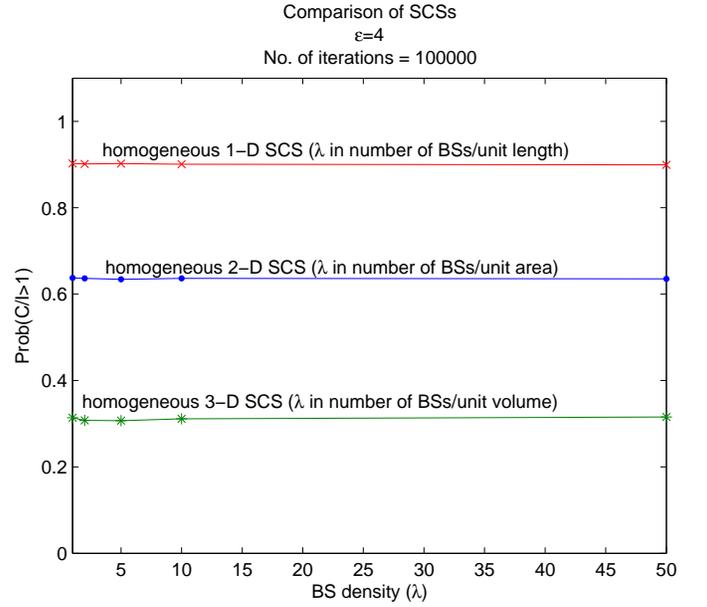}
\par\end{centering}

\caption{\label{fig:uniform_scs_comparison}Invariance of $\frac{C}{I}$ of
the \textit{homogeneous} $l$-D SCS w.r.t. BS density $\left(\lambda\right)$.
The $\frac{C}{I}$ tail probability is independent of $\lambda$ (proved
in Remark \ref{rem:ctoiIndependentOfLambda}), and the \textit{homogeneous}
1-D SCS has a better $\frac{C}{I}$ tail probability than the \textit{homogeneous}
2-D and 3-D SCS (proved in Remark \ref{rem:ctoiAndEpsilon}).}

\end{figure}
The constant $\mathcal{K}_{\frac{\varepsilon}{l}}$ can be obtained
by using $\left(\ref{eq:charfnitoc2}\right)$ and evaluating the integral
in $\left(\ref{eq:nosf-expr}\right)$ with $\eta=1$. Note that $\frac{C}{I}$
is a non-negative random variable with a support of $\left[0,\infty\right)$,
and surprisingly, its tail probability has such a simple form as given
by $\left(\ref{eq:ctoiTailGt1-2}\right)$ in the region $\left[1,\infty\right)$.
Next, we define the so-called \textit{few BS approximation} and derive
closed form expressions for the tail probability of $\frac{C}{I}$
at MS in a \textit{homogeneous} $l$-D SCS for both the regions $\left[0,1\right)$
and $\left[1,\infty\right)$.
\begin{defn}
\textit{The few BS approximation} corresponds to modeling the total
interference power at the MS in the SCS as the sum of the contributions
from the strongest few interfering BSs and an ensemble average of
the contributions of the rest of the interfering BSs.
\end{defn}
Recall from $\left(\ref{eq:ctoinosf}\right)$ that $P_{I}=\sum_{i=2}^{\infty}KR_{i}^{-\varepsilon}$,
where $\left\{ R_{i}\right\} _{i=1}^{\infty}$ is the set of distances
of BSs arranged in the ascending order of their separation from the
MS. The arrangement also corresponds to the descending order of their
contribution to $P_{I}$. In the few BS approximation, $P_{I}$ is
approximated by $\tilde{P_{I}}\left(k\right)=\sum_{i=2}^{k}KR_{i}^{-\varepsilon}+E\left[\left.\sum_{i=k+1}^{\infty}KR_{i}^{-\varepsilon}\right|R_{k}\right],$
for some $k$, where $E\left[\cdot\right]$ is the expectation operator
and refers to the ensemble average of the contributions of BSs beyond
$R_{k}$. The $\frac{C}{I}$ at the MS obtained by the few BS approximation
is denoted by $\frac{C}{I_{k}}$. Next, we study $\frac{C}{I_{k}}$
for the \textit{homogeneous} $l$-D SCSs.
\begin{lem}
\label{lem:fewBSMeanCor}For the homogeneous $l$-D SCS, with BS density
$\lambda_{0}$ and $\varepsilon>l$, for k=1,2,3, \begin{eqnarray}
E\left[\left.\sum_{i=k+1}^{\infty}KR_{i}^{-\varepsilon}\right|R_{k}\right] & = & \frac{\lambda_{0}b_{l}KR_{k}^{l-\varepsilon}}{\varepsilon-l}.\label{eq:fewBSappMean}\end{eqnarray}
\end{lem}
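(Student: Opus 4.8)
The plan is to condition on $R_k=r_k$ and exploit the defining independence property of the homogeneous Poisson point process. First I would observe that, given $R_k=r_k$, the interfering BSs indexed $i>k$ are precisely the points of the process lying outside the ball $B(0,r_k)\subseteq\mathbb{R}^l$; since point counts in disjoint regions are independent, the restriction of the process to $\mathbb{R}^l\setminus B(0,r_k)$ is again a homogeneous Poisson point process with the same intensity $\lambda_0$, unaffected by the conditioning event (which constrains only the closed ball $\overline{B(0,r_k)}$). Hence the conditional expectation reduces to the mean of the additive functional $x\mapsto K|x|^{-\varepsilon}$ summed over a homogeneous Poisson process on the exterior region.

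Next I would invoke Campbell's theorem: the mean of $\sum_{i>k}KR_i^{-\varepsilon}$ given $R_k=r_k$ equals $\lambda_0\int_{|x|>r_k}K|x|^{-\varepsilon}\,dx$. Passing to the radial variable and using that the $(l-1)$-dimensional measure of the sphere of radius $r$ in $\mathbb{R}^l$ is $b_l r^{l-1}$ (with $b_l=2,2\pi,4\pi$ for $l=1,2,3$), this becomes $\lambda_0 b_l K\int_{r_k}^\infty r^{l-1-\varepsilon}\,dr$. Finally I would evaluate this one-dimensional integral: since $\varepsilon>l$ the exponent $l-1-\varepsilon<-1$, so the integral converges and equals $r_k^{l-\varepsilon}/(\varepsilon-l)$; multiplying by $\lambda_0 b_l K$ yields $\dfrac{\lambda_0 b_l K r_k^{l-\varepsilon}}{\varepsilon-l}$, which is $\left(\ref{eq:fewBSappMean}\right)$ after writing $R_k$ for $r_k$. (The statement is restricted to $k=1,2,3$ only because that is the range needed for the few BS approximation; the argument is identical for every $k\ge1$.)

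As for the main obstacle: there is no deep difficulty, but the step that must be handled with care is the reduction in the first paragraph --- that conditioning on the $k$-th nearest distance leaves a genuine homogeneous Poisson process of intensity $\lambda_0$ outside $B(0,r_k)$. If one prefers a self-contained argument using only the conditional densities $f_{R_i\mid R_{i-1}}$ already recorded above, the alternative is to write the conditional expectation as $\sum_{i>k}\mathrm{E}\!\left[KR_i^{-\varepsilon}\mid R_k=r_k\right]$, expand each term as an iterated integral of $\prod_{j=k+1}^i f_{R_j\mid R_{j-1}}$ over $r_k\le r_{k+1}\le\cdots\le r_i$, integrate out $r_{k+1},\dots,r_{i-1}$ (the exponentials telescope, leaving a Poisson-type series in $\lambda_0 b_l(r^l-r_k^l)/l$ after the substitution $u_j=r_j^l/l$), and then interchange the resulting sum with the outer integral by Tonelli; the series sums to $\mathrm{e}^{\lambda_0 b_l(r^l-r_k^l)/l}$, cancelling the surviving exponential and again producing $\lambda_0 b_l K\int_{r_k}^\infty r^{l-1-\varepsilon}\,dr$. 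Either route needs only the hypothesis $\varepsilon>l$ to make the final integral finite.
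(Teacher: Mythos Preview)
Your argument is correct. The paper does not actually prove this lemma in-line; its entire proof is the sentence ``See \cite[Corollary 4]{Madhusudhanan2010a},'' so there is no derivation here to compare against. Your Campbell--theorem route (condition on $R_k=r_k$, use independent scattering to get a homogeneous PPP of intensity $\lambda_0$ on $\{|x|>r_k\}$, then integrate $K|x|^{-\varepsilon}$ in polar coordinates) is the standard one and is almost certainly what the cited reference contains; your alternative via the telescoping product of the $f_{R_i\mid R_{i-1}}$ densities is also sound and has the minor advantage of relying only on objects already displayed in the paper. Either way the only place the hypothesis $\varepsilon>l$ enters is exactly where you identify it, the convergence of $\int_{r_k}^{\infty}r^{l-1-\varepsilon}\,dr$.
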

\begin{proof}
See \cite[Corollary 4]{Madhusudhanan2010a}.
\end{proof}
Next, the tail probability of $\frac{C}{I_{2}}=\left.\frac{C}{I_{k}}\right|_{k=2}$
is derived.
\begin{thm}
\label{thm:fewBS}\emph{In the homogeneous $l$-D SCS with BS density
$\lambda_{0}$ and path-loss exponent $\varepsilon$ $\left(\varepsilon>l\right)$,
the tail probability of $\frac{C}{I_{2}}$ is \begin{eqnarray}
 &  & \mathbb{P}\left(\left\{ \frac{C}{I_{2}}>\eta\right\} \right)\label{eq:few-bs-approx}\\
 &  & =\left\{ \begin{array}{ll}
\eta^{-\frac{l}{\varepsilon}}C_{\frac{\varepsilon}{l}}, & \eta\geq1\\
1-\frac{(1+u(\eta))}{\mathrm{e}^{u(\eta)}}+\eta^{-\frac{l}{\varepsilon}}D_{\frac{\varepsilon}{l}}\left(\eta\right), & \eta<1\end{array}\right.,\end{eqnarray}
where $u\left(\eta\right)=\left(\frac{\varepsilon}{l}-1\right)\left(\frac{1}{\eta}-1\right),$
$C_{\frac{\varepsilon}{l}}=G(0,\infty),$ $D_{\frac{\varepsilon}{l}}\left(\eta\right)=G(u(\eta),\infty),$
and $G\left(a,b\right)=\int_{v=a}^{b}\frac{v\mathrm{e}^{-v}}{\left(1+v\left(\frac{\varepsilon}{l}-1\right)^{-1}\right)^{\frac{l}{\varepsilon}}}dv$.}\end{thm}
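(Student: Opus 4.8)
The plan is to reduce $\frac{C}{I_{2}}$ to an explicit function of the two nearest base--station distances $R_{1}\le R_{2}$ and then integrate against their joint law. Applying Lemma~\ref{lem:fewBSMeanCor} with $k=2$ to $I_{2}=KR_{2}^{-\varepsilon}+E\!\left[\left.\sum_{i=3}^{\infty}KR_{i}^{-\varepsilon}\,\right|R_{2}\right]$ gives $I_{2}=KR_{2}^{-\varepsilon}\bigl(1+\tfrac{\lambda_{0}b_{l}R_{2}^{l}}{\varepsilon-l}\bigr)$, and since $C=KR_{1}^{-\varepsilon}$,
\[
\frac{C}{I_{2}}=\left(\frac{R_{2}}{R_{1}}\right)^{\!\varepsilon}\bigg/\left(1+\frac{\lambda_{0}b_{l}R_{2}^{l}}{\varepsilon-l}\right);
\]
the hypothesis $\varepsilon>l$ keeps the denominator positive and makes Lemma~\ref{lem:fewBSMeanCor} applicable. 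I would then pass to the variables $x=\tfrac{\lambda_{0}b_{l}}{l}R_{1}^{l}$ and $y=\tfrac{\lambda_{0}b_{l}}{l}R_{2}^{l}$: using the densities $f_{R_{1}}$ and $f_{R_{2}\mid R_{1}}$ stated before the theorem, the Jacobian cancels the polynomial prefactors, so $(x,y)$ has joint density $\mathrm{e}^{-y}$ on $\{0\le x\le y\}$, and with $\alpha:=\varepsilon/l>1$,
\[
\frac{C}{I_{2}}=\frac{(y/x)^{\alpha}}{1+y(\alpha-1)^{-1}},
\]
which also makes transparent that $\frac{C}{I_{2}}$ is independent of $\lambda_{0}$.

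For fixed $y$ the last expression is strictly decreasing in $x$, so the event $\bigl\{\tfrac{C}{I_{2}}>\eta\bigr\}$ is the half--line $0\le x<g(y)$ with $g(y):=y\,\bigl[\eta\,(1+y(\alpha-1)^{-1})\bigr]^{-1/\alpha}$. Hence
\[
\mathbb{P}\!\left(\tfrac{C}{I_{2}}>\eta\right)=\int_{0}^{\infty}\Bigl(\int_{0}^{\min(y,\,g(y))}dx\Bigr)\mathrm{e}^{-y}\,dy,
\]
the $\min$ encoding the geometric constraint $x\le y$ (that is, $R_{1}\le R_{2}$). The elementary equivalence $g(y)\le y\iff y\ge(\alpha-1)\!\left(\tfrac1\eta-1\right)=u(\eta)$ forces a split into two cases, and this is the crux of the argument.

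If $\eta\ge1$ then $u(\eta)\le0$, so $g(y)\le y$ for every $y\ge0$; the inner integral equals $g(y)$ and $\int_{0}^{\infty}g(y)\,\mathrm{e}^{-y}\,dy=\eta^{-l/\varepsilon}G(0,\infty)=\eta^{-l/\varepsilon}C_{\varepsilon/l}$ directly from the definition of $G$. If $\eta<1$ then $u(\eta)>0$: on $0\le y<u(\eta)$ the active constraint is $x\le y$, the inner integral is $y$, and it contributes $\int_{0}^{u(\eta)}y\,\mathrm{e}^{-y}\,dy=1-(1+u(\eta))\mathrm{e}^{-u(\eta)}$; on $y\ge u(\eta)$ the inner integral is $g(y)$, contributing $\eta^{-l/\varepsilon}G(u(\eta),\infty)=\eta^{-l/\varepsilon}D_{\varepsilon/l}(\eta)$. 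Adding the two pieces yields the two--branch formula in the statement.

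The only genuinely delicate point is the region of integration: one must notice that for $\eta<1$ it is the geometric constraint $R_{1}\le R_{2}$, not the signal--quality constraint, that is binding for small $y$, and pin the crossover at exactly $y=u(\eta)$ — this is also why the $\eta<1$ branch is not a pure power law. Everything else is the change of variables above together with the primitive $\int y\,\mathrm{e}^{-y}\,dy=-(1+y)\mathrm{e}^{-y}$. As a sanity check, the $\eta\ge1$ branch recovers the $\eta^{-l/\varepsilon}$ scaling of Corollary~\ref{cor:ctoiClosedForm}, now with the constant identified explicitly as $C_{\varepsilon/l}=G(0,\infty)$.
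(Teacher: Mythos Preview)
Your argument is correct: the reduction of $\frac{C}{I_{2}}$ to $(y/x)^{\alpha}/(1+y(\alpha-1)^{-1})$ via Lemma~\ref{lem:fewBSMeanCor} and the change of variables $(x,y)=\tfrac{\lambda_{0}b_{l}}{l}(R_{1}^{l},R_{2}^{l})$ is exactly right, the crossover $g(y)\le y\iff y\ge u(\eta)$ is identified correctly, and the two integrals evaluate to the stated branches. The paper itself does not give a proof here but merely cites \cite[Theorem~2]{Madhusudhanan2010a}; your derivation is the natural direct computation and is precisely the kind of argument that reference would contain, so there is no substantive difference to report.
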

\begin{proof}
See \cite[Theorem 2]{Madhusudhanan2010a}.
\end{proof}
Notice that $\mathbb{P}\left(\left\{ \frac{C}{I}>\eta\right\} \right)=\frac{K_{\frac{\varepsilon}{l}}}{C_{\frac{\varepsilon}{l}}}\mathbb{P}\left(\left\{ \frac{C}{I_{2}}>\eta\right\} \right)$
for $\eta\ge1$.%
\begin{figure}
\centering{}\includegraphics[clip,scale=0.7]{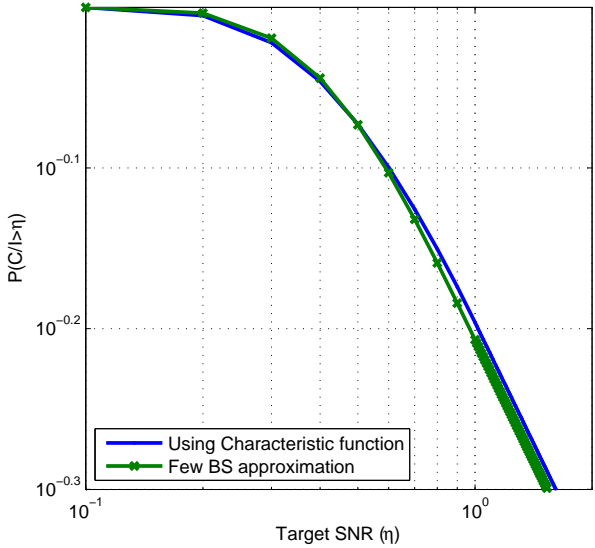}\caption{\label{fig:CompareCtoiCtoi2}\textit{Homogeneous} 2-D SCS: Comparing
$\frac{C}{I}$ and $\frac{C}{I_{2}}$ , $\varepsilon=4$}

\end{figure}
 Fig. \ref{fig:CompareCtoiCtoi2} shows the comparison of the tail
probabilities of $\frac{C}{I}$ (computed using the characteristic
function of $\left(\frac{C}{I}\right)^{-1}$) and $\frac{C}{I_{2}}$
for the \textit{homogeneous} 2-D SCS with path-loss exponent 4. Notice
that the gap between the two tail probability curves is negligible
in the region $\eta\in\left[0,1\right]$, and further, both the curves
are straight lines parallel to each other in the region $\eta\in\left[1,\infty\right)$,
when the tail probability is plotted against $\eta$, both in the
logarithmic scale. This shows that $\frac{C}{I_{2}}$ is a good approximation
for $\frac{C}{I}$ and can be characterized in closed form.

\section{$\frac{C}{I+N}$ in a single tier network\label{sec:noise}}

Here, as in Section \ref{sec:ctoi_singleTier}, the transmission powers
of all BSs are constant and shadow fading factors are equal to unity.
We first obtain the tail probability of $\frac{C}{I+N}$ using the
characteristic function of $\left(\frac{C}{I+N}\right)^{-1}$ derived
in the following corollary.
\begin{cor}
\emph{\label{cor:INtoCcharfn}In a homogeneous $l$-D SCS with BS
density $\lambda_{0}$ and path-loss exponent $\varepsilon$ $\left(\varepsilon>l\right)$,
the characteristic function of the sum of the total interference power
$\left(P_{I}\right)$ and noise power $\left(N\right)$ conditioned
on $R_{1}$ is $\Phi_{\left.P_{I}+N\right|R_{1}}\left(\left.\omega\right|r_{1}\right)=\exp\left(i\omega N\right)\times\Phi_{\left.P_{I}\right|R_{1}}\left(\left.\omega\right|r_{1}\right)$,
and the characteristic function of $\left(\frac{C}{I+N}\right)^{-1}$
is }\textup{\emph{$\Phi_{\left(\frac{C}{I+N}\right)^{-1}}\left(\omega\right)=\mathrm{E}_{R_{1}}\left[\exp\left(i\frac{\omega}{P_{S}}N\right)\times\Phi_{P_{I}|R_{1}}\left(\left.\frac{\omega}{P_{S}}\right|R_{1}\right)\right]$,
}}\emph{where $\Phi_{\left.P_{I}\right|R_{1}}\left(\left.\omega\right|r_{1}\right)$
is given by $\left(\ref{eq:charfnPi2}\right)$.}\end{cor}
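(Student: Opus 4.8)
The plan is to derive this corollary directly from Theorem \ref{thm:ldscs}, using two structural facts that hold under the assumptions of this section: the noise power $N$ is a deterministic constant, and the serving-BS power $P_{S}=KR_{1}^{-\varepsilon}$ is a function of $R_{1}$ alone (since all transmission powers are equal and the shadow fading factors are unity, the closest BS is the serving BS, exactly as in $\left(\ref{eq:ctoinosf}\right)$).

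First I would establish part (a). Conditioning on $R_{1}=r_{1}$, the random variable $P_{I}+N$ is simply $P_{I}$ translated by the deterministic constant $N$, so the shift property of characteristic functions yields
\begin{equation}
\Phi_{\left.P_{I}+N\right|R_{1}}\left(\left.\omega\right|r_{1}\right)=\mathrm{e}^{i\omega N}\,\mathrm{E}\left[\left.\mathrm{e}^{i\omega P_{I}}\right|R_{1}=r_{1}\right]=\mathrm{e}^{i\omega N}\,\Phi_{\left.P_{I}\right|R_{1}}\left(\left.\omega\right|r_{1}\right),
\end{equation}
and $\Phi_{\left.P_{I}\right|R_{1}}$ is supplied in closed form by $\left(\ref{eq:charfnPi2}\right)$ of Theorem \ref{thm:ldscs}(a).

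Next, for part (b), I would write $\left(\frac{C}{I+N}\right)^{-1}=\frac{P_{I}+N}{P_{S}}$ and apply the law of total expectation, conditioning on $R_{1}$. Since $P_{S}$ is $R_{1}$-measurable, given $R_{1}$ it acts as a constant, so the conditional characteristic function of $\frac{P_{I}+N}{P_{S}}$ equals $\Phi_{\left.P_{I}+N\right|R_{1}}$ evaluated at the frequency $\frac{\omega}{P_{S}}$; substituting the identity from part (a) and then averaging over $R_{1}$ produces the claimed expression for $\Phi_{\left(\frac{C}{I+N}\right)^{-1}}\left(\omega\right)$. This mirrors the derivation of $\left(\ref{eq:charfnitoc2}\right)$ in Theorem \ref{thm:ldscs}(b), the only new ingredient being the phase factor $\mathrm{e}^{i\frac{\omega}{P_{S}}N}$ carried along from the constant $N$.

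I do not expect any real obstacle: the corollary is essentially bookkeeping on top of Theorem \ref{thm:ldscs}. The only point deserving an explicit line of justification is that $P_{S}$ and $N$ may legitimately be pulled out as constants inside the inner conditional expectation, which is immediate because $N$ is a fixed system parameter and $P_{S}=KR_{1}^{-\varepsilon}$ depends only on the conditioning variable $R_{1}$.
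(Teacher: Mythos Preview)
Your proposal is correct and follows essentially the same approach as the paper: the paper's own proof is a one-liner stating that both expressions ``follow directly from the definition of characteristic function, where $N$ is a constant,'' and your argument spells out precisely this---the shift property for the constant $N$ and conditioning on $R_{1}$ so that $P_{S}=KR_{1}^{-\varepsilon}$ may be treated as a constant inside the inner expectation.
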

\begin{proof}
The expressions for $\Phi_{\left.P_{I}+N\right|R_{1}}\left(\left.\omega\right|r_{1}\right)$
and $\Phi_{\left(\frac{C}{I+N}\right)^{-1}}\left(\omega\right)$ follow
directly from the definition of characteristic function, where $N$
is a constant.
\end{proof}
Further, the tail probability of $\frac{C}{I+N}$ is obtained by substituting
$\frac{C}{I}$ with $\frac{C}{I+N}$ in $\left(\ref{eq:nosf-expr}\right)$.
Next, an interesting property of the $\frac{C}{I+N}$ at the MS in
the \textit{homogeneous} $l$-D SCS is presented.
\begin{cor}
\label{cor:ctoiInvariance}\emph{If the $\frac{C}{I+N}$ at the MS
in the homogeneous $l$-D SCS is specified by $\left(\lambda_{0},\varepsilon,K,N\right)$
where $\lambda_{0}$ is the BS density, $\varepsilon$ is the path-loss
exponent, $K$ is the constant transmission power of each BS, and
$N$ is the constant noise power, then, \begin{eqnarray}
\left.\frac{C}{I+N}\right|_{\left(\lambda_{0},\varepsilon,K,N\right)} & =_{\mathrm{st}} & \left.\frac{C}{I+N}\right|_{\left(1,\varepsilon,1,N'\right)},\label{eq:ctoINoiseHomogeneous}\end{eqnarray}
where $N'=N\left/\left(\lambda_{0}^{\frac{\varepsilon}{l}}K\right)\right.$
and {}``$=_{\mathrm{st}}$'' means that the c.d.f's are same.}\end{cor}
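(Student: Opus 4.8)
The plan is to establish the distributional identity by a deterministic spatial rescaling of the underlying homogeneous Poisson point process, together with the Poisson mapping (scaling) theorem. Recall from $\left(\ref{eq:ctoisf}b\right)$ that, with constant transmission power $K$ and unit shadow fading, $\frac{C}{I+N}=\frac{KR_{1}^{-\varepsilon}}{\sum_{i=2}^{\infty}KR_{i}^{-\varepsilon}+N}$, where $0\le R_{1}\le R_{2}\le\cdots$ are the ordered distances from the origin to the atoms of a homogeneous Poisson point process on $\mathbb{R}^{l}$ of intensity $\lambda_{0}$. The only role the parameters $\left(\lambda_{0},K,N\right)$ play is through the joint law of the $\left\{ R_{i}\right\}$ and through the constants multiplying the powers, and the idea is to absorb $\lambda_{0}$ and $K$ into a redefinition of the distances and of the noise.

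First I would apply the map $x\mapsto\lambda_{0}^{1/l}x$ on $\mathbb{R}^{l}$ and set $\tilde{R}_{i}:=\lambda_{0}^{1/l}R_{i}$. Since this map sends a homogeneous Poisson process of intensity $\lambda_{0}$ to one of intensity $1$, and since it preserves the ordering of the distances, the sequence $\left\{ \tilde{R}_{i}\right\}$ is distributed exactly as the ordered distances in a \emph{homogeneous} $l$-D SCS with BS density $1$ (and the closest atom remains the serving BS). Equivalently, this can be verified directly from the marginal p.d.f. of $R_{1}$ and the conditional p.d.f.'s $f_{R_{i}|R_{i-1}}$ stated just before Theorem \ref{thm:ldscs}: the substitution $r_{i}=\lambda_{0}^{-1/l}\tilde{r}_{i}$ turns $\lambda_{0}b_{l}r_{i}^{l-1}\,dr_{i}$ into $b_{l}\tilde{r}_{i}^{l-1}\,d\tilde{r}_{i}$ and $\lambda_{0}b_{l}r_{i}^{l}/l$ into $b_{l}\tilde{r}_{i}^{l}/l$, i.e. the density-$1$ expressions.

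Next, substituting $R_{i}=\lambda_{0}^{-1/l}\tilde{R}_{i}$ gives $KR_{i}^{-\varepsilon}=K\lambda_{0}^{\varepsilon/l}\tilde{R}_{i}^{-\varepsilon}$, so
\[
\frac{C}{I+N}=\frac{K\lambda_{0}^{\varepsilon/l}\tilde{R}_{1}^{-\varepsilon}}{\sum_{i=2}^{\infty}K\lambda_{0}^{\varepsilon/l}\tilde{R}_{i}^{-\varepsilon}+N}=\frac{\tilde{R}_{1}^{-\varepsilon}}{\sum_{i=2}^{\infty}\tilde{R}_{i}^{-\varepsilon}+N/(\lambda_{0}^{\varepsilon/l}K)},
\]
where the common positive factor $K\lambda_{0}^{\varepsilon/l}$ cancels in numerator and denominator; this cancellation is legitimate because $\varepsilon>l$ makes the interference sum almost surely finite (its conditional mean is finite by Lemma \ref{lem:fewBSMeanCor}). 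The right-hand side is precisely the expression for $\frac{C}{I+N}$ of a \emph{homogeneous} $l$-D SCS with BS density $1$, constant transmission power $1$, and noise power $N'=N/\left(\lambda_{0}^{\varepsilon/l}K\right)$, evaluated on the point process $\left\{ \tilde{R}_{i}\right\}$. Since $\left\{ \tilde{R}_{i}\right\}$ has the law of the unit-intensity SCS, the two random variables have the same c.d.f., which is exactly $\left(\ref{eq:ctoINoiseHomogeneous}\right)$.

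The only things needing care are the bookkeeping of exponents and the two facts used en route: that the rescaled point process is again Poisson with intensity $1$ (Poisson mapping theorem, or the explicit densities above) and that the prefactor of the infinite interference sum may be pulled out and cancelled ($\varepsilon>l$). Neither is a genuine obstacle. As an alternative that avoids the point-process picture, one can argue at the level of characteristic functions using Corollary \ref{cor:INtoCcharfn}: in $\Phi_{\left(\frac{C}{I+N}\right)^{-1}}\left(\omega\right)=\mathrm{E}_{R_{1}}\left[\exp\left(i\frac{\omega}{P_{S}}N\right)\Phi_{P_{I}|R_{1}}\left(\left.\frac{\omega}{P_{S}}\right|R_{1}\right)\right]$ with $P_{S}=KR_{1}^{-\varepsilon}$ and $\Phi_{P_{I}|R_{1}}$ as in $\left(\ref{eq:charfnPi2}\right)$, the change of variable $r_{1}\mapsto\lambda_{0}^{-1/l}r_{1}$ inside the expectation makes the argument of $_{1}F_{1}$ reduce to $i\omega$ and the noise term to $\exp\left(i\omega N' r_{1}^{\varepsilon}\right)$, so all dependence on $\left(\lambda_{0},K,N\right)$ collapses into the single combination $N'=N/\left(\lambda_{0}^{\varepsilon/l}K\right)$, yielding the same conclusion. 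I would present the rescaling argument as the main proof, since it is the shortest and most transparent.
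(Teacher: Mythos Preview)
Your rescaling argument is correct and is the natural route to $\left(\ref{eq:ctoINoiseHomogeneous}\right)$: mapping a homogeneous Poisson process of intensity $\lambda_{0}$ on $\mathbb{R}^{l}$ by $x\mapsto\lambda_{0}^{1/l}x$ yields a unit-intensity process, and the common factor $K\lambda_{0}^{\varepsilon/l}$ then cancels from numerator and denominator, leaving only $N'=N/(\lambda_{0}^{\varepsilon/l}K)$ as the residual parameter. The bookkeeping and the justification for cancelling inside the almost-surely finite interference sum are fine.

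The paper itself does not give a self-contained proof of this corollary; it simply cites \cite[Corollary~9]{Madhusudhanan2010a}. Your argument is exactly the kind of direct scaling proof one would expect there, and the alternative you sketch via Corollary~\ref{cor:INtoCcharfn} (change of variable in the $R_{1}$-expectation so that $\lambda_{0}$ and $K$ disappear from $\Phi_{P_{I}|R_{1}}$ and enter only through the noise phase $\exp(i\omega N' r_{1}^{\varepsilon})$) is equally valid and closer in spirit to how the surrounding results in the paper are phrased. Either presentation would be acceptable; the point-process rescaling is indeed the cleaner of the two.
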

\begin{proof}
See \cite[Corollary 9]{Madhusudhanan2010a}.
\end{proof}
\begin{figure}
\begin{centering}
\includegraphics[scale=0.6]{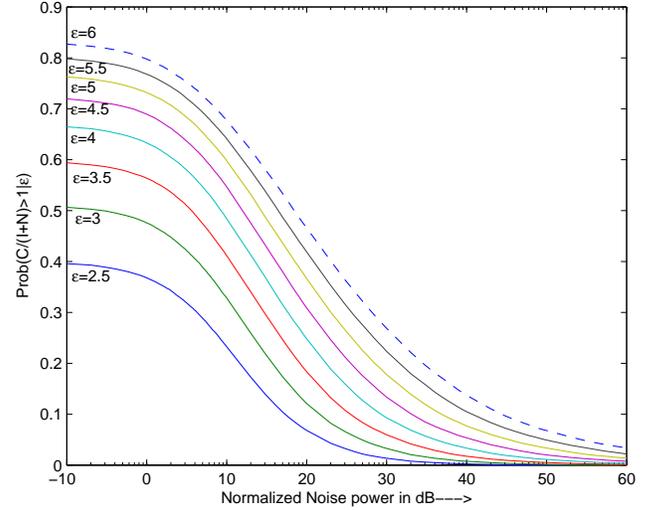}
\par\end{centering}

\caption{\label{fig:cinr_plot}$\mathbb{P}\left(\left\{ \frac{C}{I+N}>1\right\} \right)$
vs Normalized noise power $\left(N\lambda_{0}^{-\frac{\varepsilon}{l}}K^{-1}\right)$
for $l=2$}

\end{figure}
So, it is sufficient to analyze the \textit{homogeneous} $l$-D with
$\lambda_{0}=K=1$ and maintain a table for the tail probability of
$\frac{C}{I+N}$ for different values of $N'$ and $\varepsilon$.
We can find the $\frac{C}{I+N}$ at the MS for a \textit{homogeneous}
$l$-D SCS with any given $\left(\lambda_{0},\varepsilon,K,N\right)$
by just reading out the tail probability of $\frac{C}{I+N}$ corresponding
to $\varepsilon$ and $N'$ obtained using Corollary \ref{cor:ctoiInvariance}
from the lookup table. The lookup table is presented for a \textit{homogeneous}
2-D SCS in Fig. \ref{fig:cinr_plot} as a plot of $\mathbb{P}\left(\left\{ \frac{C}{I+N}>1\right\} \right)$
against $N'$ for different values of $\varepsilon$. Further, in
the \textit{homogeneous} $l$-D SCS, as $\lambda_{0}$ increases,
the noise power $N'$ of the equivalent SCS decreases according to
Corollary \ref{cor:ctoiInvariance}, and in the limit as $\lambda_{0}\rightarrow\infty$,
$N'$ approaches zero and hence $\frac{C}{I+N}\overset{D}{\rightarrow}\frac{C}{I}$,
where $\overset{D}{\rightarrow}$ corresponds to convergence in distribution.
Thus, in an {}``interference limited system'' (large $\lambda_{0}$),
the signal quality is measured in terms of $\frac{C}{I}$. Next, we
study the effect of shadow fading on the $\frac{C}{I}$ and $\frac{C}{I+N}$
at the MS in a single tier network.

\section{Shadow fading\label{sec:sf}}

Theorem \ref{thm:ShadowFadingHomogeneousInvariance} analytically
shows that the effect of the introduction of shadow fading to the
SCS is completely captured in the BS density of the \textit{homogeneous}
$l$-D SCS.
\begin{thm}
\label{thm:ShadowFadingHomogeneousInvariance}\emph{When shadow fading
in the form of i.i.d non-negative random factors, $\left\{ \Psi_{i}\right\} $,
is introduced to the homogeneous $l$-D SCS with BS density $\lambda_{0}$,
for the $\frac{C}{I}$ and $\frac{C}{I+N}$ analysis, the resulting
system is equivalent to another homogeneous $l$-D SCS with BS density
$\lambda_{0}\mathrm{E}\left[\Psi^{\frac{l}{\varepsilon}}\right]$,
as long as $\mathrm{E}\left[\Psi^{\frac{l}{\varepsilon}}\right]<\infty$.}\end{thm}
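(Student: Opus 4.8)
The plan is to recognize that the $\frac{C}{I}$ and $\frac{C}{I+N}$ at the MS depend on the BS configuration only through the unordered collection of received powers $\left\{K\Psi_iR_i^{-\varepsilon}\right\}_{i\ge1}$: the serving BS attains the largest element of this collection, and $P_S$, $P_I=\sum_{i\ge2}K\Psi_iR_i^{-\varepsilon}$ and $P_S+P_I+N$ are deterministic functions of it. Writing each received power as $K\Psi_iR_i^{-\varepsilon}=K\widetilde R_i^{-\varepsilon}$ with the \emph{effective distance} $\widetilde R_i:=R_i\Psi_i^{-1/\varepsilon}$, the claim reduces to showing that the point process $\left\{\widetilde R_i\right\}$ on $(0,\infty)$ has the same distribution as the BS-distance process of a fading-free \emph{homogeneous} $l$-D SCS with density $\lambda_0\mathrm{E}\left[\Psi^{\frac{l}{\varepsilon}}\right]$; because $\rho\mapsto K\rho^{-\varepsilon}$ is a decreasing bijection of $(0,\infty)$, equality of these processes forces equality in distribution of $\frac{C}{I}$ and $\frac{C}{I+N}$. (For $\frac{C}{I}$ alone the density change is immaterial by Remark \ref{rem:ctoiIndependentOfLambda}, but stating the equivalence at the level of the BS process handles $\frac{C}{I+N}$ uniformly, consistent with Corollary \ref{cor:ctoiInvariance}.)

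First I would lift the faded network to a marked Poisson point process on $\mathbb{R}^l\times(0,\infty)$, with BS locations forming a homogeneous PPP of intensity $\lambda_0$ and each location $x$ carrying an independent mark $\Psi$; by the marking theorem this has intensity measure $\lambda_0\,dx\times\mathbb{P}_\Psi(d\psi)$. Applying the displacement map $(x,\psi)\mapsto x\,\psi^{-1/\varepsilon}$ and the mapping theorem for Poisson processes, the displaced points $\left\{x_i\Psi_i^{-1/\varepsilon}\right\}$ again form a Poisson point process on $\mathbb{R}^l$, whose intensity measure on a Borel set $A$ equals
\[\int_{(0,\infty)}\lambda_0\left|\psi^{1/\varepsilon}A\right|\,\mathbb{P}_\Psi(d\psi)=\lambda_0\,|A|\int_{(0,\infty)}\psi^{\frac{l}{\varepsilon}}\,\mathbb{P}_\Psi(d\psi)=\lambda_0\,\mathrm{E}\left[\Psi^{\frac{l}{\varepsilon}}\right]\,|A|,\]
where I used the scaling identity $\left|\psi^{1/\varepsilon}A\right|=\psi^{l/\varepsilon}|A|$ for $l$-dimensional Lebesgue measure. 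Hence the displaced process is a homogeneous PPP on $\mathbb{R}^l$ with constant intensity $\lambda_0\mathrm{E}\left[\Psi^{\frac{l}{\varepsilon}}\right]$, so its vector of magnitudes $\left\{\widetilde R_i\right\}$ obeys exactly the radial law recalled just before Theorem \ref{thm:ldscs}, with $\lambda_0$ replaced by $\lambda_0\mathrm{E}\left[\Psi^{\frac{l}{\varepsilon}}\right]$. This is precisely a fading-free homogeneous $l$-D SCS with that density, which completes the argument.

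The hard part will be the measure-theoretic hypotheses of the mapping/marking theorem rather than any computation: one must verify that the displaced intensity measure $\lambda_0\mathrm{E}\left[\Psi^{\frac{l}{\varepsilon}}\right]dx$ is non-atomic and locally finite, which is exactly where $\mathrm{E}\left[\Psi^{\frac{l}{\varepsilon}}\right]<\infty$ is used; if this moment diverged, the displaced process would accumulate infinitely many points near the origin and neither the serving BS nor $\frac{C}{I}$ would be well defined. A more computational alternative, which sidesteps the abstract theorem, is to insert the marks $\Psi_i$ directly into Theorem \ref{thm:ldscs}: conditioning on the marks and using the probability generating functional of the Poisson process, one recomputes $\Phi_{P_I|R_1}$ and $\Phi_{(C/I)^{-1}}$ (and their $\frac{C}{I+N}$ analogues via Corollary \ref{cor:INtoCcharfn}) and checks that $\Psi$ enters only through the substitution $\lambda_0\mapsto\lambda_0\mathrm{E}\left[\Psi^{\frac{l}{\varepsilon}}\right]$; this gives the same equivalence but is considerably messier than the displacement argument.
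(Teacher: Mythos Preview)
Your proof is correct and follows essentially the same route as the paper: define the effective distance $\widetilde R_i=R_i\Psi_i^{-1/\varepsilon}$ and invoke the marking/mapping theorem for Poisson processes to conclude that $\{\widetilde R_i\}$ is the radial process of a homogeneous $l$-D PPP with intensity $\lambda_0\,\mathrm{E}[\Psi^{l/\varepsilon}]$. Your write-up is in fact more complete than the paper's sketch, since you carry out the intensity computation explicitly via the scaling identity and pinpoint where the moment condition $\mathrm{E}[\Psi^{l/\varepsilon}]<\infty$ is used.
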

\begin{proof}
The expression for $\frac{C}{I}$ and $\frac{C}{I+N}$ in $\left(\ref{eq:ctoisf}a\right)$
and $\left(\ref{eq:ctoisf}b\right)$ may be written as $\frac{C}{I}=\frac{\bar{R}_{1}^{-\varepsilon}}{\sum_{k=2}^{\infty}\bar{R}_{k}^{-\varepsilon}}$
and $\frac{C}{I+N}=\frac{\bar{R}_{1}^{-\varepsilon}}{\sum_{k=2}^{\infty}\bar{R}_{k}^{-\varepsilon}+N}$,
where $\bar{R}_{1}=R_{S}\Psi_{S}^{-\frac{1}{\varepsilon}}$ and $\bar{R}_{k+1}\equiv R_{k}\Psi_{k}^{-\frac{1}{\varepsilon}},\ k=1,2,3\cdots$.
Now, the expression for $\frac{C}{I}$ is similar to the no shadow
fading case in $\left(\ref{eq:ctoinosf}\right)$, with the $\bar{R}$'s
replacing the $R$'s. Using the Marking theorem of Poisson process
in \cite[Page 55]{Kingman1993}, $\bar{R}=R\Psi^{-\frac{1}{\varepsilon}}$
follows the homogeneous Poisson process in $\mathbb{R}^{l}$ with
intensity $\lambda_{0}\mathrm{E}\left[\Psi^{\frac{l}{\varepsilon}}\right]$.
For a complete proof, see \cite[Theorem 4]{Madhusudhanan2010a}.
\end{proof}
Further, the $\frac{C}{I}$ and $\frac{C}{I+N}$ at the MS in the
\textit{homogeneous} $l$-D SCS with shadow fading is the same as
that in the equivalent \textit{homogeneous} $l$-D SCS where there
is no shadow fading. The following remark illustrates the consequence
of the theorem on the $\frac{C}{I}$ and $\frac{C}{I+N}$ at the MS.
\begin{rem}
\label{rem:ShadowingHomogeneousCtoi}In the \textit{homogeneous} $l$-D
SCS with BS density $\lambda_{0}$,

(a) shadow fading has no effect on the $\frac{C}{I}$ at the MS, and,

(b) the effect of shadow fading is completely captured in the noise
power term of the $\frac{C}{I+N}$.\end{rem}
\begin{proof}
Firstly, using Theorem \ref{thm:ShadowFadingHomogeneousInvariance},
it is sufficient to analyze the $\frac{C}{I}$ and $\frac{C}{I+N}$
for the \textit{homogeneous} $l$-D SCS with BS density $\lambda_{0}\mathrm{E}\left[\Psi^{\frac{l}{\varepsilon}}\right].$
Then, Remark \ref{rem:ShadowingHomogeneousCtoi}(a) follows from Remark
\ref{rem:ctoiIndependentOfLambda}. Finally, since the $\frac{C}{I+N}$
in this case has the same c.d.f. as the equivalent \textit{homogeneous}
$l$-D SCS in Corollary \ref{cor:ctoiInvariance} with $N'=NK^{-1}\left(\lambda_{0}\mathrm{E}\left[\Psi^{\frac{l}{\varepsilon}}\right]\right)^{-\frac{\varepsilon}{l}},$
Remark \ref{rem:ShadowingHomogeneousCtoi}(b) is proved.\end{proof}
\begin{example}
\label{exa:sf3dscs}Consider a \textit{homogeneous} 2-D SCS with an
average BS density $\lambda_{0}$, where each BS is affected by an
i.i.d log-normal shadow fading factor with a mean 0 and standard deviation
$\sigma$. Using Theorem \ref{thm:ShadowFadingHomogeneousInvariance},
the equivalent \textit{homogeneous} 2-D SCS has a BS density $\bar{\lambda}_{0}=\lambda_{0}\exp\left(\frac{2\sigma^{2}}{\varepsilon^{2}}\right)$.
Note that $\bar{\lambda}_{0}\ge\lambda_{0},\ \forall\ \sigma,\ \varepsilon$,
and from Remark \ref{rem:ShadowingHomogeneousCtoi}, the introduction
of shadow fading improves the $\frac{C}{I+N}$ performance at the
MS measured in terms of the tail probability of $\frac{C}{I}$.
\end{example}
Next, we study the $\frac{C}{I}$ and the $\frac{C}{I+N}$ at the
MS in a multi-tier network based on the analysis for the single tier
network modelled as the \textit{homogeneous} $l$-D SCS.

\section{Multi-tier networks (M-tier networks)\label{sec:multiTier}}

All the BSs of the $i^{\mathrm{th}}$ tier of a $M$-tier network
are assumed to have constant transmission power, $\left\{ \kappa_{i}\right\} _{i=1}^{M}$.
Firstly, the $M$-tier network is reduced to an equivalent single
tier network.
\begin{thm}
\label{thm:multiTierTheorem}\emph{Consider a multi-tier network consisting
of $M$ independent homogeneous $l$-D SCS with BS density $\left\{ \lambda_{i}\right\} _{i=1}^{M},$
such that all the BSs in $i^{\mathrm{th}}$ tier have a constant transmission
power $\kappa_{i}$, then, this multi-tier network is equivalent to
a single tier network (homogeneous $l$-D SCS) with}

\emph{(a) BS density $\lambda_{0}=\sum_{i=1}^{N}\lambda_{i}$, and,}

\emph{(b) the transmission power of each BS is an i.i.d. random variable
$K=\kappa_{i}$ with a probability $p_{i}=\frac{\lambda_{i}}{\lambda_{0}},$
$i=1,2,\cdots,M$.}\end{thm}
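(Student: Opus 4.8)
The plan is to reduce the $M$-tier model to a single homogeneous $l$-D SCS by two classical operations on Poisson point processes: \emph{superposition} of the $M$ independent tier processes, followed by an \emph{independent marking} (colouring) that records, for every retained BS, the tier it originated from and hence its transmission power. Equivalence here means that the resulting $\frac{C}{I}$ and $\frac{C}{I+N}$ have the same distribution in the two descriptions.

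First I would invoke the superposition theorem for Poisson processes \cite{Kingman1993}: if $\Phi_1,\dots,\Phi_M$ are independent homogeneous Poisson point processes on $\mathbb{R}^l$ with intensities $\lambda_1,\dots,\lambda_M$, then their union $\Phi=\bigcup_{i=1}^M\Phi_i$ is again a homogeneous Poisson point process on $\mathbb{R}^l$, with intensity $\lambda_0=\sum_{i=1}^M\lambda_i$ (almost surely the $\Phi_i$ are pairwise disjoint, so no point is double-counted); this is part (a). Next, attach to every point $x\in\Phi$ the mark $m(x)=\kappa_i$ whenever $x\in\Phi_i$, i.e. the transmission power of the tier that produced it. The key structural fact is that this labelled superposition has exactly the law of the \emph{independently marked} Poisson process obtained from a single homogeneous PPP of intensity $\lambda_0$ in which each point is, independently of its location and of all other points, assigned the value $\kappa_i$ with probability $p_i=\lambda_i/\lambda_0$. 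This is the converse of the colouring/marking theorem \cite[Page 55]{Kingman1993}: colouring a rate-$\lambda_0$ PPP with colour $i$ of probability $p_i$ produces independent rate-$\lambda_i$ sub-PPPs, and since a Poisson process is determined by its intensity and the marks are i.i.d. and location-independent, the two constructions yield the same marked point process in distribution. This is precisely statement (b).

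It then remains to check that $\frac{C}{I}$ and $\frac{C}{I+N}$ are the same random variables (in law) in the two descriptions. Writing the signal received at the MS (at the origin) from a BS at $x$ with power mark $\kappa$ as $\kappa\|x\|^{-\varepsilon}$, both the serving BS — the one maximising $\kappa\|x\|^{-\varepsilon}$ — and the aggregate interference $P_I=\sum_{x}\kappa_x\|x\|^{-\varepsilon}-P_S$ are measurable functionals of the marked point pattern $\{(\|x\|,\kappa_x)\}_{x\in\Phi}$ alone; since $\varepsilon>l$ guarantees (as in $\left(\ref{eq:ctoinosf}\right)$ and Theorem \ref{thm:ldscs}) that the interference sum converges almost surely, these functionals are well defined. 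The marked pattern has the same distribution under the $M$-tier model and under the equivalent single-tier model by the previous paragraph, so the induced laws of $\frac{C}{I}$ in $\left(\ref{eq:ctoisf}a\right)$ and of $\frac{C}{I+N}$ in $\left(\ref{eq:ctoisf}b\right)$ coincide; the noise $N$, a deterministic constant common to both, plays no role in the argument.

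The only genuinely delicate point is the identification in the second paragraph — that the origin-labelled superposition of independent Poisson processes is distributionally the same object as a single Poisson process equipped with i.i.d. location-independent marks. I would make this precise by comparing void (avoidance) probabilities, or equivalently Laplace functionals, of the two marked processes on the product sets $A_i\times\{\kappa_i\}$: both equal $\exp(-\sum_{i}\lambda_i|A_i|)$, which pins down the law of a simple point process on $\mathbb{R}^l\times\{\kappa_1,\dots,\kappa_M\}$. If one additionally wishes to allow per-tier shadow fading, one first applies Theorem \ref{thm:ShadowFadingHomogeneousInvariance} to replace each tier by an equivalent unfaded homogeneous $l$-D SCS with density $\lambda_i\mathrm{E}[\Psi_i^{l/\varepsilon}]$ before superposing, and the rest of the argument carries over verbatim.
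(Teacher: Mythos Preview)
Your proof is correct and follows the same overall strategy as the paper: superposition for part (a), then identification of the tier labels as i.i.d.\ marks for part (b). The only difference is in how (b) is justified. The paper argues elementarily by fixing a small region $\mathcal{H}\subseteq\mathbb{R}^l$ and computing directly that $\mathbb{P}(\text{BS is from tier }i\mid\text{exactly one BS in }\mathcal{H})=\lambda_i/\lambda_0$, independent of $\mathcal{H}$; you instead invoke the converse of the colouring theorem and propose to pin down the law of the marked process via void probabilities (or Laplace functionals) on sets $A_i\times\{\kappa_i\}$. Your route is slightly more general and more rigorous, since it establishes in one stroke the full joint law of all marks (i.i.d.\ and location-independent), whereas the paper's single-region computation really only handles one mark at a time. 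Your explicit remark that $\frac{C}{I}$ and $\frac{C}{I+N}$ are measurable functionals of the marked pattern, and hence inherit equality in law, is also a welcome addition that the paper leaves implicit.
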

\begin{proof}
Theorem \ref{thm:multiTierTheorem} (a) follows directly from the
Superposition theorem of Poisson process in \cite[Page 16]{Kingman1993},
and holds true even as $M\rightarrow\infty.$ Next, consider a region
$\mathcal{H}\subseteq\mathbb{R}^{l},$ and let $\mathbb{P}_{i}\left(\mathcal{H}\right)$
denotes the probability of finding one BS belonging to the $i^{\mathrm{th}}$
tier conditioned on the event that there exists one BS in $\mathcal{H}$.
Then, $\mathbb{P}_{i}\left(\mathcal{H}\right)=\frac{\mathbb{P}\left(\left\{ N_{i}=1\right\} \right)}{\mathbb{P}\left(\left\{ N_{0}=1\right\} \right)}\times\prod_{j=1,\ j\ne i}^{M}\mathbb{P}\left(\left\{ N_{j}=0\right\} \right),$
where $\left\{ N_{i}\right\} _{i=0}^{M}$ is the set of random variables
denoting the number of BSs in $\mathcal{H}$ for the \textit{homogeneous}
$l$-D SCS with BS density $\left\{ \lambda_{i}\right\} _{i=0}^{M}$
($\lambda_{0}$ is defined in Theorem \ref{thm:multiTierTheorem}(a)),
respectively. Note that $N_{i}\sim\mathrm{Poisson}\left(\lambda_{i}\left|\mathcal{H}\right|\right)$,
where $\left|\mathcal{H}\right|$ is the length, area or volume of
$\mathcal{H}$ for $l=1,2,\ \mathrm{and}\ 3,$ respectively. Further,
$\mathbb{P}\left(\mathcal{H}\right)=\frac{\lambda_{i}}{\lambda_{0}}$
is independent of $\mathcal{H}$ and hence, $\mathbb{P}\left(\left\{ \left.K=\kappa_{i}\right|\mbox{1 BS in }\mathcal{H}\right\} \right)=\frac{\lambda_{i}}{\lambda_{0}}.$
\end{proof}
The following remarks result due to Theorem \ref{thm:multiTierTheorem}.
\begin{rem}
\label{rem:multiTierEqsingleTier}The equivalent \textit{homogeneous}
$l$-D SCS in Theorem \ref{thm:multiTierTheorem}, with BS density
$\lambda_{0}$ and i.i.d. random transmission powers can further be
reduced to the \textit{homogeneous} $l$-D SCS with BS density $\lambda_{0}E\left[K^{\frac{l}{\varepsilon}}\right]=\lambda_{0}\sum_{i=1}^{M}p_{i}\kappa_{i}^{\frac{l}{\varepsilon}}$
and unity transmission powers at all BSs.\end{rem}
\begin{proof}
\textit{(Outline)} In $\left(\ref{eq:ctoisf}a\right)$ and $\left(\ref{eq:ctoisf}b\right)$,
$\Psi_{S}$ and $\left\{ \Psi_{i}\right\} _{i=1}^{\infty}$ are equal
to unity; $K_{S}$ and $\left\{ K_{i}\right\} _{i=1}^{\infty}$ are
i.i.d. discrete random variables with the probability mass function
(p.m.f.) of $K$ (Theorem \ref{thm:multiTierTheorem} (b)). Now, follow
the same steps in the proof of Theorem \ref{thm:ShadowFadingHomogeneousInvariance}
with $\bar{R}=RK^{-\frac{1}{\varepsilon}}$ to obtain the result.\end{proof}
\begin{rem}
\label{rem:multiTierCtoiIndependence}For the multi-tier network,
Theorem \ref{thm:ldscs} and Remark \ref{rem:ctoiTailProb} together
give the tail probability of $\frac{C}{I}$, and Remark \ref{rem:ctoiIndependentOfLambda}
shows that the $\frac{C}{I}$ is independent of the $\lambda_{0}$,
$\left\{ p_{i}\right\} _{i=1}^{M}$, and $\left\{ \kappa_{i}\right\} _{i=1}^{M}$.
Further, Corollary \ref{cor:ctoiClosedForm} gives the closed form
expression for the tail probability of $\frac{C}{I}$ in $\left[1,\infty\right)$,
and Theorem \ref{thm:fewBS} gives the closed form expression for
$\frac{C}{I}$ under few BS approximation.
\end{rem}
Since the \textit{homogeneous} $l$-D SCS (homogeneous Poisson point
process in $\mathbb{R}^{l}$) has the maximum entropy for a given
mean number of points in any subset of $\mathbb{R}^{l}$, Remark \ref{rem:multiTierCtoiIndependence}
shows that even the most arbitrary placement of BSs in $\mathbb{R}^{l}$
does not degrade the $\frac{C}{I}$ performance, and hence any intelligent
strategy in BS placement in any of the tiers of the multi-tier network
will only improve the $\frac{C}{I}$.
\begin{rem}
\label{rem:multiTierCtoIN}The $\frac{C}{I+N}$ of the multi-tier
network has the same c.d.f. as that of the equivalent \textit{homogeneous}
$l$-D SCS in Corollary \ref{cor:ctoiInvariance} with $N'=N\left(\lambda_{0}\mathrm{E}\left[K^{\frac{l}{\varepsilon}}\right]\right)^{-\frac{\varepsilon}{l}}.$
Further, the tail probability of $\frac{C}{I+N}$ can be computed
using Corollary \ref{cor:INtoCcharfn} and $\left(\ref{eq:nosf-expr}\right)$
(by replacing $\frac{C}{I}$ with $\frac{C}{I+N}).$

Note that $\frac{C}{I}$ and $\frac{C}{I+N}$ studied in this section,
for a multi-tier network with $M=2$ corresponds to a 2-tier network
with macrocell network and femtocell network with all femtocell BSs
operating in the open access mode. Further, an important consequence
of Remark \ref{rem:multiTierCtoIN} is as follows.\end{rem}
\begin{cor}
\label{cor:multiTierPerformanceImprovement}\emph{Inclusion of additional
tiers of wireless networks with low transmission power BSs over an
existing wireless network will only improve the $\frac{C}{I+N}$ performance
of the overall network. Further, as the BS density of the additional
tiers increases, $\frac{C}{I+N}$ performance keeps improving, and
approaches the $\frac{C}{I}$ performance as the BS density approaches
infinity.}\end{cor}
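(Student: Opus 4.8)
The plan is to reduce the pre- and post-augmentation networks to canonical homogeneous $l$-D SCSs of the form $(1,\varepsilon,1,N')$ and then exploit the monotonicity of the $\frac{C}{I+N}$ tail probability in the single scalar $N'$.

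First I would apply Theorem \ref{thm:multiTierTheorem}, Remark \ref{rem:multiTierEqsingleTier} and Remark \ref{rem:multiTierCtoIN}: for $\frac{C}{I+N}$ purposes a collection of tiers with densities $\{\lambda_i\}$ and constant transmission powers $\{\kappa_i\}$ has the same c.d.f.\ as the canonical SCS $(1,\varepsilon,1,N')$ with $N' = N\bigl(\sum_i \lambda_i \kappa_i^{l/\varepsilon}\bigr)^{-\varepsilon/l}$ (using $\lambda_0=\sum_i\lambda_i$ and $p_i=\lambda_i/\lambda_0$, so that $\lambda_0\mathrm{E}[K^{l/\varepsilon}]=\sum_i\lambda_i\kappa_i^{l/\varepsilon}$, which is finite for finitely many tiers). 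Writing $\mathcal{A}$ for the set of existing tiers and $\mathcal{B}$ for the added ones, with all $\lambda_i,\kappa_i>0$, the effective noise powers are $N'_{\mathcal{A}}=N(\sum_{i\in\mathcal{A}}\lambda_i\kappa_i^{l/\varepsilon})^{-\varepsilon/l}$ and $N'_{\mathcal{A}\cup\mathcal{B}}=N(\sum_{i\in\mathcal{A}\cup\mathcal{B}}\lambda_i\kappa_i^{l/\varepsilon})^{-\varepsilon/l}$; since every term added to the sum is strictly positive, $N'_{\mathcal{A}\cup\mathcal{B}}<N'_{\mathcal{A}}$, whatever the powers of the new BSs (the ``low transmission power'' hypothesis is the practically relevant regime, but is not needed for the ordering itself).

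The crux is the monotonicity lemma: for the canonical SCS, $N'\mapsto\mathbb{P}(\{\frac{C}{I+N'}>\eta\})$ is non-increasing at every $\eta\ge0$. I would prove it by a pathwise coupling: realize the homogeneous Poisson point process once, independently of $N'$, freezing the ordered distances $R_1\le R_2\le\cdots$; then $\frac{C}{I+N'}=\dfrac{R_1^{-\varepsilon}}{\sum_{i\ge2}R_i^{-\varepsilon}+N'}$ is a deterministic, strictly decreasing function of $N'\in[0,\infty)$ (equal to $\frac{C}{I}$ at $N'=0$). Hence $N'_1\le N'_2$ forces $\frac{C}{I+N'_1}\ge\frac{C}{I+N'_2}$ almost surely, and taking probabilities gives the ordering of tail probabilities. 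Combined with the previous paragraph, the augmented network first-order stochastically dominates the original one in $\frac{C}{I+N}$, and likewise the tail probability is non-decreasing in each added density $\lambda_i$, $i\in\mathcal{B}$; i.e.\ the $\frac{C}{I+N}$ performance only improves.

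For the limiting claim I would send the densities of the tiers in $\mathcal{B}$ to infinity; then $\sum_i\lambda_i\kappa_i^{l/\varepsilon}\to\infty$, so $N'\downarrow0$, and from the coupling $\frac{C}{I+N'}\uparrow\frac{C}{I}$ pointwise, hence $\{\frac{C}{I+N'}>\eta\}\uparrow\{\frac{C}{I}>\eta\}$. Continuity of probability from below then yields $\mathbb{P}(\{\frac{C}{I+N'}>\eta\})\uparrow\mathbb{P}(\{\frac{C}{I}>\eta\})$, i.e.\ $\frac{C}{I+N}\overset{D}{\rightarrow}\frac{C}{I}$, which is exactly the convergence already noted after Corollary \ref{cor:ctoiInvariance}. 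I expect the only step beyond bookkeeping to be the pathwise monotonicity lemma, and even that is essentially immediate once the canonical reduction of Remark \ref{rem:multiTierCtoIN} is in place; everything else is assembling Theorem \ref{thm:multiTierTheorem}, Theorem \ref{thm:ldscs} with Remark \ref{rem:ctoiTailProb}, and Corollary \ref{cor:ctoiInvariance}.
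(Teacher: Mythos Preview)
Your proposal is correct and follows essentially the same route as the paper: reduce both the original and the augmented networks to the canonical $(1,\varepsilon,1,N')$ system via Theorem~\ref{thm:multiTierTheorem}, Remark~\ref{rem:multiTierEqsingleTier} and Remark~\ref{rem:multiTierCtoIN}, observe that adding tiers strictly decreases $N'$, and then invoke monotonicity of the $\frac{C}{I+N}$ tail in $N'$; the paper even writes the effective noise as $N_2=N_1\bigl(1+\sum_{i\ge2}\tfrac{\lambda_i}{\lambda_1}(\tfrac{\kappa_i}{\kappa_1})^{l/\varepsilon}\bigr)^{-\varepsilon/l}\le N_1$, which is your inequality $N'_{\mathcal{A}\cup\mathcal{B}}<N'_{\mathcal{A}}$ in the special case $|\mathcal{A}|=1$.

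The only substantive difference is that the paper does not actually \emph{prove} the monotonicity in $N'$: it simply asserts that smaller noise gives better performance and points to Fig.~\ref{fig:cinr_plot}, whereas your pathwise coupling (freeze the Poisson realization and note $N'\mapsto R_1^{-\varepsilon}/(\sum_{i\ge2}R_i^{-\varepsilon}+N')$ is strictly decreasing) supplies an honest argument for both the ordering and the convergence $\frac{C}{I+N'}\overset{D}{\rightarrow}\frac{C}{I}$ as $N'\downarrow0$. Your observation that the ``low transmission power'' hypothesis is not used in the ordering is also correct; the paper's own computation of $N_2$ makes no use of any smallness of $\kappa_i/\kappa_1$ either.
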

\begin{proof}
The existing wireless network is a single tier network with BS density
$\lambda_{1}$ with a constant transmission power $\kappa_{1}$ at
all the BSs. Hence, the $\frac{C}{I+N}$ has the c.d.f. as the equivalent
homogeneous $l$-D SCS in Corollary \ref{cor:ctoiInvariance} with
$N'=N_{1}=N\lambda_{1}^{-\frac{\varepsilon}{l}}\kappa_{1}^{-1}$.
Now, let $M-1$ additional wireless networks be installed on top of
this single tier network to form an $M$-tier network, with BS densities
$\left\{ \lambda_{i}\right\} _{i=2}^{M-1}$ and constant transmission
powers $\left\{ \kappa_{i}\right\} _{i=2}^{M}$. From Remark \ref{rem:multiTierCtoIN},
this $M$-tier network has the same c.d.f. as the equivalent homogeneous
$l$-D SCS in Corollary \ref{cor:ctoiInvariance} with $N'=N_{2}=N_{1}\left(1+\sum_{i=2}^{M}\frac{\lambda_{i}}{\lambda_{1}}\cdot\left(\frac{\kappa_{i}}{\kappa_{1}}\right)^{\frac{l}{\varepsilon}}\right)^{-\frac{\varepsilon}{l}}\le N_{1}.$
 Thus, the $M$-tier network has a smaller noise power, which leads
to an improved $\frac{C}{I+N}$ performance of the overall network
compared to the existing wireless network (c.f. Fig. \ref{fig:cinr_plot}).
Further, as $\left\{ \lambda_{i}\right\} _{i=2}^{M}$ increases, $N_{2}$
decreases, and converges to zero as at least one of $\left\{ \lambda_{i}\right\} _{i=2}^{M}$
approaches $\infty$. Then, the $\frac{C}{I+N}$ converges, in distribution,
to the $\frac{C}{I}$ of a single tier network (Section \ref{sec:ctoi_singleTier}).
\end{proof}
In a practical cellular system, the macrocell BSs have large transmission
powers in order to provide cellular coverage, and the microcell, picocell
and femtocell BSs have relatively smaller transmission powers. In
this case, Corollary \ref{cor:multiTierPerformanceImprovement} applies;
hence the installation of these networks with low power BSs will not
harm the existing cellular performance and any intelligent strategy
will only improve it.

\begin{rem}
\label{rem:multiTierShadowing}When i.i.d. shadow fading factors $\left\{ \Psi_{i}\right\} $
independent of the BS placement random process are introduced to the
multi-tier network, $\frac{C}{I}$ is the same as in a multi-tier
network without shadow fading and $\frac{C}{I+N}$ has the same c.d.f.
as that of the equivalent \textit{homogeneous} $l$-D SCS in Corollary
\ref{cor:ctoiInvariance} with $N'=N\left(\lambda_{0}\mathrm{E}\left[K^{\frac{l}{\varepsilon}}\right]\times E\left[\Psi^{\frac{l}{\varepsilon}}\right]\right)^{-\frac{\varepsilon}{l}},$
as long as $E\left[\Psi^{\frac{l}{\varepsilon}}\right]<\infty$, where
$E\left[\cdot\right]$ is the expectation operator.\end{rem}
\begin{proof}
\textit{(Outline)} The multi-tier network is equivalent to a single
tier network with BS density $\lambda_{0}$ (Theorem \ref{thm:multiTierTheorem}(a)).
Next, in $\left(\ref{eq:ctoisf}a\right)$ and $\left(\ref{eq:ctoisf}b\right)$,
$K_{S}$ and $\left\{ K_{i}\right\} _{i=1}^{\infty}$ are i.i.d. discrete
random variables with the p.m.f. of $K$ (Theorem \ref{thm:multiTierTheorem}
(b)); $\Psi_{S}$ and $\left\{ \Psi_{i}\right\} _{i=1}^{\infty}$
are i.i.d. random factors with the p.d.f. of $\Psi$. Now, follow
the same steps in the proof of Theorem \ref{thm:ShadowFadingHomogeneousInvariance}
with $\bar{R}=RK^{-\frac{1}{\varepsilon}}\Psi^{-\frac{1}{\varepsilon}}$,
and reduce the equivalent single tier network to another single tier
network with BS density $\lambda_{0}\mathrm{E}\left[K^{\frac{l}{\varepsilon}}\right]E\left[\Psi^{\frac{l}{\varepsilon}}\right]$,
unity transmission powers at all BSs and no shadow fading. Finally,
use Remark \ref{rem:multiTierCtoiIndependence} and Remark \ref{rem:multiTierCtoIN}
to complete the proof.
\end{proof}

\begin{cor}
\emph{(Ideal sectorized antennas) If each BS in the $i^{\mathrm{th}}$
tier of the $M$-tier network has BSs with ideal sectorized antennas
with an antenna gain, $G_{i},$ and beam-width $\theta_{i}$, such
that the BSs antenna faces the MS with probability $\frac{\theta_{i}}{2\pi}$,
in which case the transmission power is $K_{i}=G_{i}\times X_{i}$,
where $X_{i}\sim\mathrm{Bernoulli}\left(\frac{\theta_{i}}{2\pi}\right)$
for each $i=1,2,\cdots,M$, the equivalent homogeneous $l$-D SCS
will have BSs with transmission powers which are i.i.d. random variables
and have a probability mass function (p.m.f.): $\mathbb{P}\left(\left\{ K=G_{i}\right\} \right)=\frac{p_{i}\theta_{i}}{2\pi},\ i=1,2,\cdots,N,\ \mathrm{and}\ \mathbb{P}\left(\left\{ K=0\right\} \right)=1-\sum_{i=1}^{M}\frac{p_{i}\theta_{i}}{2\pi}.$}
\end{cor}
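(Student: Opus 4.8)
The plan is to treat the ideal-sectorized-antenna model as an independent marking of the superposed Poisson process and then invoke Theorem~\ref{thm:multiTierTheorem}, except with tier-wise \textit{random} rather than constant transmission powers. First I would note that, because the orientation of each BS antenna is independent of the BS locations and of all the other antennas, the power a tier-$i$ BS delivers toward the MS is the i.i.d. random variable $K_i=G_iX_i$ with $X_i\sim\mathrm{Bernoulli}\!\left(\frac{\theta_i}{2\pi}\right)$; a BS whose main beam does not face the MS simply contributes $0$ to both $C$ and $I$ and may be kept in the process with zero power. Hence tier $i$ is a \textit{homogeneous} $l$-D SCS with BS density $\lambda_i$ whose BSs carry i.i.d. transmission powers distributed as $G_iX_i$.

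Next I would superpose the $M$ tiers. By the Superposition theorem of the Poisson process (\cite[Page~16]{Kingman1993}), exactly as in Theorem~\ref{thm:multiTierTheorem}(a), the union is a \textit{homogeneous} Poisson point process on $\mathbb{R}^l$ with density $\lambda_0=\sum_{i=1}^M\lambda_i$. Repeating verbatim the conditional-probability computation in the proof of Theorem~\ref{thm:multiTierTheorem}(b), a BS present in a small region $\mathcal{H}$ belongs to tier $i$ with probability $p_i=\frac{\lambda_i}{\lambda_0}$, independently of $\mathcal{H}$ and of the other points; attaching to that BS the further independent mark $G_iX_i$ makes the transmission powers of the equivalent single-tier SCS i.i.d. across its points.

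It then remains only to identify the p.m.f. of this mark $K$. Marginalizing over the tier label and the Bernoulli variable, and assuming the $\{G_i\}$ are distinct, $\mathbb{P}\!\left(\{K=G_i\}\right)=p_i\,\mathbb{P}\!\left(\{X_i=1\}\right)=\frac{p_i\theta_i}{2\pi}$; the leftover mass concentrates at $0$, giving $\mathbb{P}\!\left(\{K=0\}\right)=\sum_{i=1}^M p_i\,\mathbb{P}\!\left(\{X_i=0\}\right)=\sum_{i=1}^M p_i\!\left(1-\frac{\theta_i}{2\pi}\right)=1-\sum_{i=1}^M\frac{p_i\theta_i}{2\pi}$, where the last step uses $\sum_{i=1}^M p_i=1$. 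This matches the claimed p.m.f.

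This is essentially bookkeeping layered on Theorem~\ref{thm:multiTierTheorem}; the one place that deserves a careful sentence is the independence claim, namely that the antenna-orientation marks are independent across BSs and independent of both the tier-membership marks and the underlying Poisson process, so that the composite mark $K$ really is i.i.d. over the points of the superposed process. If some $G_i$ happen to coincide, the masses at that common value simply add, and the statement is to be read with that convention.
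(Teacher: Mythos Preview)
Your proof is correct and is exactly the argument the paper has in mind: the corollary is stated there without an explicit proof, as an immediate consequence of Theorem~\ref{thm:multiTierTheorem} together with the independent-marking machinery used in Remarks~\ref{rem:multiTierEqsingleTier} and~\ref{rem:multiTierShadowing}. You have simply written out those details---superpose the tiers, attach the composite (tier-label, Bernoulli-orientation) mark, check independence, and marginalize---which is precisely the intended route.
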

Further, the $\frac{C}{I}$ and $\frac{C}{I+N}$ of this multi-tier
network can be computed using Remark \ref{rem:multiTierCtoiIndependence}
and Remark \ref{rem:multiTierCtoIN}. Finally, Corollary \ref{cor:multiTierPerformanceImprovement}
and Remark \ref{rem:multiTierShadowing} also hold true for this multi-tier
network.

\section{Conclusions}

In this paper, we study the $\frac{C}{I}$ and $\frac{C}{I+N}$ at
the MS within a multi-tier network, where each tier is modeled as
the \textit{homogeneous} $l$-D SCS $\left(l=1,2,\mathrm{\ and\ }3\right)$.
Most studies of wireless networks model the network as the homogeneous
Poisson point process in $\mathbb{R}^{2}$. Here, we study the wireless
network with the BS distribution according the homogeneous Poisson
point process in $\mathbb{R}^{1}$ and $\mathbb{R}^{3}$ as well,
and highlight their significance in practical scenarios.

The $\frac{C}{I}$ and $\frac{C}{I+N}$ in a single tier network are
thoroughly analyzed. Using these results, we completely characterize
the $\frac{C}{I}$ and the $\frac{C}{I+N}$ at the MS within a multi-tier
($M$-tier) network. This paper brings together and refines a set
of results on \textit{homogeneous} $l$-D SCS to demonstrate how the
SCS model (developed in \cite{Brown2000,Madh0000:Carrier,Madhusudhanan2010a})
can easily handle the case of multi-tier networks. The main takeaway
from this paper is due to Corollary \ref{cor:multiTierPerformanceImprovement}:
in a practical cellular system, installation of additional wireless
networks (microcells, picocells and femtocells) with low power BSs
over the already existing macrocell network will always improve the
signal quality at the MS, measured in terms of the tail probability
of $\frac{C}{I+N}$.

\bibliographystyle{IEEEtran}
\bibliography{multiTierNetworks_globecom2011_v3}

\end{document}